\newcommand{\spqlex}{\text{SPARQL}^{\mbox{\tiny{NEX}}}} 
\newcommand{\spqlexs}{\text{SPARQL}^{\mbox{\tiny{NEX}}}_{\mbox{\tiny{safe}}}}  
\newcommand{\spqldiff}{\text{SPARQL}^{\mbox{\tiny{DIFF}}}}
\newcommand{\blank}{\operatorname{blank}}
\newcommand{\dg}{\operatorname{dg}}
\newcommand{\name}{\operatorname{name}}
\newcommand{\names}{\operatorname{names}}
\newcommand{\graph}{\operatorname{gr}}
\newcommand{\dom}{\operatorname{dom}}
\newcommand{\false}{\operatorname{\emph{false}}}
\newcommand{\true}{\operatorname{\emph{true}}}
\newcommand{\error}{\operatorname{\emph{error}}}
\newcommand{\AAND}{\operatorname{AND}}
\newcommand{\UNION}{\operatorname{UNION}}
\newcommand{\OPT}{\operatorname{OPT}}
\newcommand{\OPTIONAL}{\operatorname{OPT}}
\newcommand{\FILTER}{\operatorname{FILTER}}
\newcommand{\MINUS}{\operatorname{MINUS}}
\newcommand{\DIFF}{\operatorname{DIFF}}
\newcommand{\GRAPH}{\operatorname{GRAPH}}
\newcommand{\bound}{\operatorname{bound}}
\newcommand{\EXCEPT}{\operatorname{EXCEPT}}
\newcommand{\NEX}{\operatorname{NOT-EXISTS}}
\newcommand{\NEG}{\operatorname{NEG}}
\def\lojoin{\hbox{\raise -.2em\hbox to-.32em{$\urcorner$} \hbox to-.08em{$\lrcorner$} $\Join$}\,}
\newcommand{\ev}[2]{\llbracket #1 \rrbracket_{#2}} 
\newcommand{\eval}[3]{\llbracket #1 \rrbracket^{#2}_{#3}}
\newcommand{\var}{\operatorname{var}}
\newcommand{\svar}{\operatorname{svar}}
\newcommand{\cardi}[2]{\operatorname{card}(#1,#2)} 
\begin{document}
\pagestyle{empty}

\title{Negation in SPARQL}
\author{Renzo Angles\inst{1,3} \and Claudio Gutierrez\inst{2,3}}
\institute{
Dept. of Computer Science, Universidad de Talca, Chile\\
\and 
Dept. of Computer Science, Universidad de Chile, Chile
\and
Center for Semantic Web Research
\\
}

\maketitle

\begin{abstract}
This paper presents a thorough study of negation in SPARQL.
The types of negation supported in SPARQL are identified and their main features discussed.
Then, we study the expressive power of the corresponding negation operators.
At this point, we identify a core SPARQL algebra which could be used instead of the W3C SPARQL algebra.   
Finally, we analyze the negation operators in terms of their compliance with elementary axioms of set theory.
\end{abstract}


\section{Introduction}

The notion of negation has been largely studied in database query
languages,  mainly due to their implications in aspects of expressive
power and computational complexity \cite{91282,91284,50660,90947,91285}. 
There have been proposed several and distinct types of negation,  and
it seems difficult to get agreement about a standard one. 
Such heterogeneity comes from intrinsic properties and semantics of 
each language, features that determine its ability to support specific
type(s) of negation(s). For instance, rule-based query languages
(e.g. Datalog) usually  implement ``negation by failure'', 
algebraic query languages (e.g. the relational algebra) usually
include a Boolean-like difference operator, and SQL-like query
languages support several versions of negation, and operators 
that combine negation with subqueries (e.g. NOT EXISTS).      



 The case of SPARQL is no exception. SPARQL 1.0 did not have any
explicit form of negation in its syntax, but could simulate negation
by failure. For the next version, SPARQL 1.1, the incorporation of
negation  generated a lot of debate.
As result, SPARQL 1.1 provides four types of negation: 
negation of filter constraints, by using the Boolean NOT operator; 
negation as failure, implemented as the combination of an optional graph pattern and the bound operator; 
difference of graph patterns, expressed by the $\MINUS$ operator;
and
existential negation of graph patterns, expressed by the $\NEX$ operator.  

The main positive and negative aspects of these types of negation are
remarked in the SPARQL specifications, and more detailed discussions
are available in the records of the SPARQL working
group\footnote{\url{https://www.w3.org/2009/sparql/wiki/Design:Negation}}. However,
to the best of our knowledge, there exists no formal study about 
negation in SPARQL. This is the goal we tackle throughout this document.

 In this paper we present a thorough study of negation in SPARQL.
First, we formalize the syntax and semantics of the different types of negation supported in SPARQL, and discuss their main features.
 Then, we study the relationships (in terms of expressive power) among the
negation operators, first at the level of the SPARQL algebra, and
subsequently at the level of SPARQL graph patterns.
 Finally, we present a case-by-case analysis of the negation operations with respect to their compliance with elementary axioms of set theory.  

It would be good to have a simple version of negation operators that conform to a known intuition. This is the other main contribution of this paper. 
  First we introduce the DIFF operator as another way of expressing the negation of graph patterns (it is the SPARQL version of the EXCEPT operator of SQL). 
  The semantics of DIFF is based on a simple difference operator introduced at the level of the SPARQL algebra.
 With this new difference operator, we show that one can define a core algebra (i.e. projection, selection, join, union and simple difference) which is able to express the W3C SPARQL algebra. 
 We show that this algebra is also able to define the SPARQL operators, thus it can be considered a sort of ``core'' SPARQL algebra. 
 Additionally, we show that the DIFF operator behaves well regarding its compliance with well known axioms of set theory.  

\paragraph{Organization of the paper.}
The syntax and semantics of SPARQL graph patterns are presented in Section 2.
In Section 3, we discuss the main characteristics of the types of negation supported by SPARQL.
The expressive power of the negation operators is studied in Section 4.
In Section 5, we analyze set-theoretic properties of two negation operators.
Finally, some conclusions are presented in Section 6.


\section{SPARQL graph patterns}
\label{sec:patterns}
The following definition of SPARQL graph patterns is based on the formalism used in \cite{10160}, but in agreement with the W3C SPARQL specifications \cite{10155,90699}.
Particularly, graph patterns will be studied assuming bag semantics (i.e. allowing duplicates in solutions).  

\paragraph{RDF graphs.}
Assume two disjoint infinite sets $I$ and $L$, called IRIs and literals respectively. 
An \emph{RDF term} is an element in the set $T = I \cup L$\footnote{In addition to $I$ and $L$, RDF and SPARQL consider a domain of anonymous resources called blank nodes.  The occurrence of blank nodes introduces several issues that are not discussed in this paper. Based on the results presented in \cite{91033}, we avoid the use of blank nodes assuming that their absence does not largely affect the results presented in this paper.}.
 An \emph{RDF triple} is a tuple $(v_1,v_2,v_3) \in I \times I \times T$ where $v_1$ is the \emph{subject}, $v_2$ the
\emph{predicate} and $v_3$ the \emph{object}.  
An \emph{RDF Graph} (just graph from now on) is a set of RDF triples.  
The \emph{union} of graphs, $G_1 \cup G_2$, is the set theoretical
union of their sets of triples.  
 Additionally, assume the existence of an infinite set $V$ of variables disjoint from $T$.
We will use $\var(\alpha)$ to denote the set of variables occurring in the structure $\alpha$. 

A \emph{solution mapping} (or just \emph{mapping} from now on) is a partial function $\mu : V \to T$ where the domain of $\mu$, $\dom(\mu)$, is the subset of $V$ where $\mu$ is defined. 
The \emph{empty mapping}, denoted $\mu_0$, is the mapping satisfying that 
$\dom(\mu_0)= \emptyset$. 
Given $?X \in V$ and $c \in T$, we use $\mu(?X) = c$ to denote the solution mapping variable $?X$ to term $c$. 
 Similarly, $\mu_{?X \to c}$ denotes a mapping $\mu$ satisfying that $\dom(\mu)=\{?X\}$ and $\mu(?X) = c$.
 Given a finite set of variables $W \subset V$, the restriction of a mapping $\mu$ to $W$, denoted $\mu_{|W}$, is a mapping $\mu'$ satisfying that  $\dom(\mu') = \dom(\mu) \cap W$ and $\mu'(?X) = \mu(?X)$ for every $?X \in \dom(\mu) \cap W$.
 Two mappings $\mu_1, \mu_2$ are \emph{compatible}, denoted $\mu_1 \sim \mu_2$, when for all $?X \in \dom(\mu_1) \cap \dom(\mu_2)$ it satisfies that $\mu_1(?X)=\mu_2(?X)$, i.e., when $\mu_1 \cup \mu_2$ is also a mapping.
Note that two mappings with disjoint domains are always compatible, and that the empty mapping $\mu_0$ is compatible with any other mapping.

A \emph{selection formula} is defined recursively as follows: 
(i) If $?X,?Y \in V$ and $c \in I \cup L$ then $(?X = c)$, $(?X = ?Y)$ and $\bound(?X)$ are atomic selection formulas;
(ii) If $F$ and $F'$ are selection formulas then $(F \land F')$, $(F \lor F')$ and $\neg (F)$ are boolean selection formulas.
 The evaluation of a selection formula $F$ under a mapping $\mu$, denoted $\mu(F)$, is defined in a three-valued logic (i.e. with values $\true$, $\false$, and $\error$) as follows:
\begin{itemize}
\item If $F$ is $?X = c$ and $?X \in \dom(\mu)$, then $\mu(F) = \true$ when $\mu(?X) = c$ and $\mu(F) = \false$ otherwise. If $?X \notin \dom(\mu)$ then $\mu(F) = \error$.
\item If $F$ is $?X = ?Y$ and $?X,?Y \in \dom(\mu)$, then $\mu(F) = \true$ when $\mu(?X) = \mu(?Y)$ and $\mu(F) = \false$ otherwise. If either $?X \notin \dom(\mu)$ or $?Y \notin \dom(\mu)$ then $\mu(F) = \error$.
\item If $F$ is $\bound(?X)$ and $?X \in \dom(\mu)$ then $\mu(F) = \true$ else $\mu(F) = \false$.
\item If $F$ is a complex selection formula then it is evaluated following the three-valued logic presented in Table \ref{table:tvl}.
\end{itemize}

\begin{table}[t]
\centering
\begin{tabular}{ccccccc}
\hline
$p$ & & $q$ & & $p \land q$ & & $p \lor q$ \\ 
\hline
$\true$ && $\true$ && $\true$ && $\true$ \\
$\true$ && $\false$ && $\false$ && $\true$ \\
$\true$ && $\error$ && $\error$ && $\true$ \\
$\false$ && $\true$ && $\false$ && $\true$ \\
$\false$ && $\false$ && $\false$ && $\false$ \\
$\false$ && $\error$ && $\false$ && $\error$ \\
$\error$ && $\true$ && $\error$ && $\true$ \\
$\error$ && $\false$ && $\false$ && $\error$ \\
$\error$ && $\error$ && $\error$ && $\error$ \\
\hline 
\\
\end{tabular}
\begin{tabular}{ccc}
\hline
$p$ & & $\neg p$ \\ 
\hline
$\true$ && $\false$ \\
$\false$ && $\true$ \\
$\error$ && $\error$ \\
\hline
\end{tabular}
\caption{Three-valued logic for evaluating selection formulas.} 
\label{table:tvl}
\end{table}

A \emph{multiset} (or \emph{bag}) of solution mappings is an unordered collection in which each solution mapping may appear more than once.
A multiset will be represented as a set of solution mappings, each one annotated with a positive integer which defines its acity (i.e. its cardinality).
 We use the symbol $\Omega$ to denote a multiset and $\cardi{\mu}{\Omega}$ to denote the cardinality of the mapping $\mu$ in the multiset $\Omega$. 
In this sense, it applies that $\cardi{\mu}{\Omega} = 0$ when $\mu \notin \Omega$.
We use $\Omega_0$ to denote the multiset $\{ \mu_0 \}$ such that $\cardi{\mu_0}{\Omega_0} > 0$ ($\Omega_0$ is called the join identity).
The domain of a solution mapping $\Omega$ is defined as $\dom(\Omega) = \bigcup_{\mu \in \Omega} \dom(\mu)$.  

\paragraph{W3C SPARQL algebra.}
Let $\Omega_1,\Omega_2$ be multisets of mappings, $W$ be a set of variables and $F$ be a selection formula.
The \emph{SPARQL algebra for multisets of mappings} is composed of the operations of projection, selection, join, difference, left-join, union and minus, defined respectively as follows:
\begin{itemize}

\item  
$\pi_W(\Omega_1) = \{ \mu' \mid \mu \in \Omega_1, \mu' = \mu_{|W} \}$\\ 
where 
$\cardi{\mu'}{\pi_W(\Omega_1)} = \sum_{\mu' = \mu_{|W}} \cardi{\mu}{\Omega_1}$ 

\item   
$\sigma_F(\Omega_1) = \{ \mu \in \Omega_1 \mid \mu(F) = \true \}$ \\ 
where $\cardi{\mu}{\sigma_F(\Omega_1)} = \cardi{\mu}{\Omega_1}$ 

\item 
$\Omega_1 \Join \Omega_2 = \{ \mu = (\mu_1 \cup \mu_2) \mid \mu_1 \in \Omega_1, \mu_2 \in \Omega_2, \mu_1 \sim \mu_2 \}$ \\
where
$\cardi{\mu}{\Omega_1 \Join \Omega_2} = \sum_{\mu = (\mu_1 \cup \mu_2)} \cardi{\mu_1}{\Omega_1}~\times~\cardi{\mu_2}{\Omega_2}$

\item 
$\Omega_1 \setminus_F \Omega_2 = \{ \mu_1 \in \Omega_1 \mid \forall \mu_2 \in \Omega_2, (\mu_1 \nsim \mu_2) \lor (\mu_1 \sim \mu_2 \land (\mu_1 \cup \mu_2)(F) = \false ) \}$ \\
where
$\cardi{\mu_1}{\Omega_1 \setminus_F \Omega_2} = \cardi{\mu_1}{\Omega_1}$

\item 
$\Omega_1 \cup \Omega_2 = \{ \mu \mid \mu \in \Omega_1 \lor \mu \in \Omega_2 \}$ \\
where
$\cardi{\mu}{\Omega_1 \cup \Omega_2} = \cardi{\mu}{\Omega_1} + \cardi{\mu}{\Omega_2}$ 

\item 
$\Omega_1 - \Omega_2 = \{ \mu_1 \in \Omega_1 \mid \forall \mu_2 \in \Omega_2, \mu_1 \nsim \mu_2 \lor \dom(\mu_1) \cap \dom(\mu_2) = \emptyset \}$ \\
where
$\cardi{\mu_1}{\Omega_1 - \Omega_2} = \cardi{\mu_1}{\Omega_1}$ 

\item
$\Omega_1 \lojoin_F \Omega_2 = \sigma_F(\Omega_1 \Join \Omega_2) \cup (\Omega_1 \setminus_F \Omega_2)$ \\
where
$\cardi{\mu}{\Omega_1 \lojoin_F \Omega_2} = \cardi{\mu}{\sigma_F(\Omega_1 \Join \Omega_2)} + \cardi{\mu}{\Omega_1 \setminus_F \Omega_2}$ 

\end{itemize}

\paragraph{Syntax of SPARQL graph patterns.}
A SPARQL \emph{graph pattern} is defined recursively as follows:
\begin{itemize}

\item 
A tuple from $(I \cup L \cup V) \times (I \cup V) \times (I \cup L \cup V)$ is a graph pattern called a \emph{triple pattern}.
\footnote{We assume that any triple pattern contains at least one variable.} 

\item If $P_1$ and $P_2$ are graph patterns then 
$( P_1 \AAND P_2 )$, 
$( P_1 \UNION P_2 )$,\\ 
$( P_1 \OPTIONAL P_2 )$, 
$( P_1 \MINUS P_2 )$ and 
$( P_1 \NEX P_2 )$ 
are graph patterns.

\item If $P_1$ is a graph pattern and $C$ is a filter constraint (as defined below) then $(P_1 \FILTER C)$ is a graph pattern.
\end{itemize}

A \emph{filter constraint} is defined recursively as follows:
(i) If $?X,?Y \in V$ and $c \in I \cup L$ then $(?X = c)$, $(?X = ?Y)$ and $\bound(?X)$ are \emph{atomic filter constraints};
(ii) If $C_1$ and $C_2$ are filter constraints then 
$(!C_1)$, $(C_1~||~C_2)$ and $(C_1~\&\&~C_2)$ 
are \emph{complex filter constraints}.
Given a filter constraint $C$, we denote by $f(C)$ the selection formula obtained from $C$. Note that there exists a simple and direct translation from filter constraints to selection formulas and viceversa.

Given a triple pattern $t$ and a mapping $\mu$ such that 
$\var(t) \subseteq \dom(\mu)$, we denote by $\mu(t)$ the triple obtained by replacing the variables in $t$ according to $\mu$. 
Overloading the above definition, we denote by $\mu(P)$ the graph pattern obtained by the recursive substitution of variables in every triple pattern and filter constraint occurring in the graph pattern $P$ according to $\mu$.

\paragraph{Semantics of SPARQL graph patterns.}
 The evaluation of a SPARQL graph pattern $P$ over an RDF graph $G$ is defined as a function $\ev{P}{G}$ (or $\ev{P}{}$ where $G$ is clear from the context) which returns a multiset of solution mappings. 
 Let $P_1,P_2,P_3$ be graph patterns and $C$ be a filter constraint.
The evaluation of a graph pattern $P$ over a graph $G$
is defined recursively as follows:
\begin{enumerate}
\item If $P$ is a triple pattern $t$, then 
$\ev{P}{G} = \{ \mu \mid \dom(\mu) = \var(t) \land \mu(t) \in G \}$
where each mapping $\mu$ has cardinality 1.	
\item $\ev{(P_1 \AAND P_2)}{G} = \ev{P_1}{G} \Join \ev{P_2}{G}$    
\item If $P$ is $(P_1 \OPTIONAL P_2)$ then
 \begin{itemize} 
 \item[(a)] if $P_2$ is $(P_3 \FILTER C)$ then $\ev{P}{G} = \ev{P_1}{G} \lojoin_{f(C)} \ev{P_3}{G}$
 \item[(b)] else $\ev{P}{G} = \ev{P_1}{G} \lojoin_{(\true)} \ev{P_2}{G}$  
 \end{itemize}
\item $\ev{(P_1 \MINUS P_2)}{G} =  \ev{P_1}{G} - \ev{P_2}{G}$  
\item $\ev{( P_1 \NEX P_2 )}{G} = \{ \mu \mid \mu \in \ev{P_1}{G} \land \ev{\mu(P_2)}{G}  = \emptyset \}$ 
\item $\ev{(P_1 \UNION P_2)}{G} =  \ev{P_1}{G} \cup \ev{P_2}{G}$  
\item $\ev{(P_1 \FILTER C)}{G} =  \sigma_{f(C)}(\ev{P_1}{G})$ 
\end{enumerate}

\section{Types of negation in SPARQL}
\label{sec:neg-types}
We can distinguish four types of negation in SPARQL:
negation of filter constraints, 
negation as failure,
negation by $\MINUS$ and
negation by $\NEX$.
The main features of these types of negation will be discussed in this section.     

\subsubsection{Negation of filter constraints.}
\label{sec:not}
The most basic type of negation in SPARQL is the one allowed in filter graph patterns by including constraints of the form $(!C)$.
Following the semantics of SPARQL, a graph pattern $(P \FILTER (!C))$, returns the mappings $\mu$ in $\ev{P}{}$ such that $\mu$ satisfies the filter constraint $(!C)$, i.e. $\mu$ does not satisfy the constraint $C$.

Considering that the evaluation of a graph pattern could return mappings with unbound variables (similar to NULL values in SQL), SPARQL uses a three-valued logic for evaluating filter graph patterns, i.e. the evaluation of a filter constraint $C$ can result in $\true$, $\false$ or $\error$. 
 For instance, given a mapping $\mu$, the constraint $?X = 1$ evaluates to $\true$ when $\mu(?X) = 1$, $\false$ when $\mu(?X) \neq 1$, and $\error$ when $?X \notin \dom(\mu)$. Note that $\FILTER$ eliminates any solutions that result in $\false$ or $\error$.   
 Recalling the semantics defined in Table \ref{table:tvl}, 
we have that $(!C)$ evaluates to $\true$ when $C$ is $\false$, $\false$ when $C$ is $\true$, and $\error$ when $C$ is $\error$.
 This type of negation, called \emph{strong negation} \cite{91281}, allows to deal with incomplete information in a similar way to the NOT operator of SQL.
 
 The negation of filter constraints is a feature well established in SPARQL and does not deserve major discussion. In the rest of the paper we concentrate our interest on the negation of graph patterns.

\subsubsection{Negation as failure.}
\label{sec:diff}
SPARQL 1.0 does not include an operator to express the negation of graph patterns. 
 In an intent of patching this issue, the SPARQL specification remarks that the negation of graph patterns can be implemented as a combination of an optional graph pattern and a filter constraint containing the bound operator (see \cite{10155}, Sec. 11.4.1).
  This style of negation, called \emph{negation as failure} in logic programming, can be illustrated as a graph pattern $P$ of the form
$((P_1 \OPT P_2) \FILTER (! \bound(?X)))$
where $?X$ is a variable of $P_2$ not occurring in $P_1$.
Note that, the evaluation of $P$ returns the mappings of $\ev{(P_1 \OPT P_2)}{}$ 
satisfying that variable $?X$ is unbounded, i.e. $?X$ does not match $P_2$.
In other words, $P$ returns ``the solution mappings of $P_1$ that are not compatible with the solutions mappings of $P_2$''. 
 Unfortunately, this simulation does no work for the general case \cite{91031}. 
 A discussion of the issues and the general solution for implementing negation by failure is included in Appendix \ref{sec:failure}.  
 
 In order to facilitate the study of negation by failure in SPARQL, we introduce the operator $\DIFF$ as an explicit way of expressing it.
 
\begin{definition}[DIFF] 
Let $P_1$ and $P_2$ be graph patterns.
The DIFF operator is defined as 
$\ev{(P_1 \DIFF P_2)}{} = \{ \mu_1 \in \ev{P_1}{} \mid \forall \mu_2 \in \ev{P_2}, \mu_1 \nsim \mu_2 \}$.
\end{definition}

 It is very important to note that the DIFF operator is not defined in SPARQL 1.0 nor in SPARQL 1.1. However, it can be directly implemented with the difference operator of the algebra of solution mappings.
 
\subsubsection{Negation by $\MINUS$.}
 SPARQL 1.1 introduced the MINUS operator as an explicit way of expressing the negation (or difference) of graph patterns. 
 Note that $\DIFF$ and $\MINUS$ have similar definitions.
 The difference is given by the restriction about disjoint mappings included by the  $\MINUS$ operator.
 Such restriction, named Antijoin Restriction inside the SPARQL working group\footnote{\url{http://lists.w3.org/Archives/Public/public-rdf-dawg/2009JulSep/0030.html}}, was introduced to avoid solutions with vacuously compatible mappings.   
 Such restriction causes different results for $\DIFF$ and $\MINUS$.  
 Basically, if $P_1$ and $P_2$ do not have variables in common then 
$\ev{(P_1 \DIFF P_2)}{} = \emptyset$
whereas
$\ev{(P_1 \MINUS P_2)}{} = \ev{P_1}{}$.

Note that both, DIFF and MINUS resemble the EXCEPT operator of SQL\cite{70072}, i.e. given two SQL queries $Q_1$ and $Q_2$, the expression $(Q_1 \EXCEPT Q_2)$ allows to return all rows that are in the table obtained from $Q_1$ \emph{except} those that also appear in the table obtained from $Q_2$.
Considering that $\EXCEPT$ makes reference to the difference of two relations (tables), we can say that $\DIFF$ and $\MINUS$ allow to express the \emph{difference of two graph patterns}.



\subsubsection{Negation by $\NEX$.}
Another type of negation defined in SPARQL 1.1 is given by the $\NEX$ operator. 
The main feature of this type of negation is the possible occurrence of correlation.  
Given a graph pattern $P = (P_1 \NEX P_2)$, we will say that $P_1$ and $P_2$ are \emph{correlated} when $\var(P_1) \cap \var(P_2) \neq \emptyset $, 
i.e. there exist variables occurring in both $P_1$ and $P_2$, and such variables 
are called \emph{correlated variables}.
 In this case, the evaluation of $P$ is attained by replacing variables in $P_2$ with the corresponding values given by the current mapping $\mu$ of $\ev{P_1}{}$, and testing whether the evaluation of the graph pattern $\mu(P_2)$ returns no solutions.
 This way of evaluating correlated queries is based on the nested iteration method \cite{50640} of SQL.
 The correlation of variables in $\NEX$ introduces several issues that have been studied in the context of subqueries in SPARQL \cite{90361,90383}.
 Some of these issues are discussed in Section \ref{sec:nex}.


$\DIFF$, $\MINUS$ and $\NEX$ are three different ways of expressing negation in SPARQL. 
$\DIFF$ and $\MINUS$ represent the difference of two graph patterns whereas $\NEX$ tests the presence of a graph pattern. 

\section{Expressive Power}
\label{sec:consolidation}

Let us recall some definitions related to expressive power. 
By the expressive power of a query language, we understand the set of all queries expressible in that language \cite{70001}.
We will say that an operator $O$ is expressible in a language $L$ iff a subset of the operators of $L$ allow to express the same queries as $O$.  
Finally, a language $L$ contains a language $L'$ iff every operator of $L'$ is expressible by $L$.

 In this section we study the expressive power of the negation operators defined in the above section, i.e. $\DIFF$, $\MINUS$ and $\NEX$. 
 First, we introduce a ``core'' SPARQL algebra which contains the W3C SPARQL algebra, but having the advantage of being smaller and simpler. 
 Such core algebra is the basis to define equivalences among graph pattern operators, in particular those implementing negation. 

\subsection{The core SPARQL algebra}
\label{sec:core}
Let us introduce a ``core'' algebra for SPARQL which is able to express all the high-level operators of the W3C SPARQL language. 
Recall that the W3C SPARQL algebra, defined in Section \ref{sec:patterns}, is composed by the operators of \emph{projection}, \emph{selection}, \emph{join}, \emph{union}, \emph{left-join} and \emph{minus}. 
Our core algebra is based on a new operator called ``simple difference''.

\begin{definition}[Simple difference]
 The simple difference between two solution mappings, $\Omega_1$ and $\Omega_2$, is defined as 
$\Omega_1 \setminus \Omega_2 = \{ \mu_1 \in \Omega_1 \mid \forall \mu_2 \in \Omega_2, \mu_1 \nsim \mu_2 \}$
where  
$\cardi{\mu_1}{\Omega_1 \setminus \Omega_2} = \cardi{\mu_1}{\Omega_1}$.
\end{definition}

\begin{definition}[Core SPARQL algebra]
\label{def:core}
The core SPARQL algebra is composed by the operations of projection, selection, join, union and simple difference. 
\end{definition}

Next, we will show that the core SPARQL algebra contains the W3C SPARQL algebra. Specifically, we will show that the operators of \emph{difference}, \emph{left-join} and \emph{minus} (of the W3C SPARQL algebra) can be simulated with the \emph{simple difference} operator (of the core SPARQL algebra). 

\begin{lemma}
\label{lemma:diff-sdiff}
The difference operator (of the W3C SPARQL algebra) is expressible in the core SPARQL algebra.
\end{lemma}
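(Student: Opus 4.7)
The plan is to establish the identity
\[
\Omega_1 \setminus_F \Omega_2 \;=\; \Omega_1 \;\setminus\; \sigma_F(\Omega_1 \Join \Omega_2),
\]
whose right-hand side uses only the core operations of join, selection, and simple difference. Once this identity is in hand, the lemma follows immediately, since any occurrence of $\setminus_F$ in a W3C SPARQL algebra expression can be replaced by this equivalent core expression. The proof then consists of verifying both directions of set equality together with preservation of cardinalities.

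For the inclusion $\subseteq$, I would fix $\mu_1 \in \Omega_1 \setminus_F \Omega_2$ and argue that $\mu_1 \nsim \mu$ for every $\mu \in \sigma_F(\Omega_1 \Join \Omega_2)$. Any such $\mu$ decomposes as $\mu_1' \cup \mu_2'$ with $\mu_1' \in \Omega_1$, $\mu_2' \in \Omega_2$, $\mu_1' \sim \mu_2'$, and $(\mu_1' \cup \mu_2')(F) = \true$; assuming $\mu_1 \sim \mu$ for contradiction I would derive $\mu_1 \sim \mu_2'$ and, exploiting the fact that compatibility with $\mu$ forces $\mu_1$ and $\mu_1'$ to agree on their common variables, transfer the truth value of $F$ to the union $\mu_1 \cup \mu_2'$, contradicting the defining property of $\setminus_F$ applied at $\mu_2 := \mu_2'$. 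For the reverse inclusion $\supseteq$, if $\mu_1 \in \Omega_1 \setminus \sigma_F(\Omega_1 \Join \Omega_2)$ and there existed $\mu_2 \in \Omega_2$ compatible with $\mu_1$ satisfying $(\mu_1 \cup \mu_2)(F) = \true$, then $\mu_1 \cup \mu_2$ would lie in $\sigma_F(\Omega_1 \Join \Omega_2)$ and be trivially compatible with $\mu_1$, contradicting the hypothesis. Cardinality equality is immediate because both $\setminus_F$ and simple difference retain $\cardi{\mu_1}{\Omega_1}$ for each surviving $\mu_1$.

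The hard part will be the truth-value transfer step in the forward inclusion: $\mu_1$ and $\mu_1'$ need not have the same domain, so a priori $(\mu_1 \cup \mu_2')(F)$ and $(\mu_1' \cup \mu_2')(F)$ could disagree in the three-valued logic, particularly when $F$ uses $\bound$ or when additional variables of $\mu_1'$ cause the evaluation to switch between $\true$ and $\error$. I plan to handle this by a structural induction on the selection formula $F$, using that $\mu_1 \sim \mu_1'$ forces agreement on all their shared variables so that each atomic equality atom evaluates identically on both unions, while $\bound$ atoms can be tracked explicitly against the respective domains. Should a residual corner case remain, I would refine the candidate expression by restricting $\sigma_F$ to mappings whose domain already contains the variables of $F$, inserting an additional conjunction of $\bound$ atoms into the selection formula.
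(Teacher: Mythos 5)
Your target identity $\Omega_1 \setminus_F \Omega_2 = \Omega_1 \setminus \sigma_F(\Omega_1 \Join \Omega_2)$ is the one the paper asserts only under the errata reading of the specification, where a filter evaluating to $\error$ is treated as $\false$; under the definition of $\setminus_F$ as stated in the paper (which demands $(\mu_1\cup\mu_2)(F)=\false$ for every compatible $\mu_2$), your reverse inclusion breaks down. Your argument for $\supseteq$ only rules out a compatible $\mu_2$ with $(\mu_1\cup\mu_2)(F)=\true$, which is weaker than what membership in $\Omega_1\setminus_F\Omega_2$ requires in the three-valued logic: if the only compatible $\mu_2$ gives $(\mu_1\cup\mu_2)(F)=\error$, then $\mu_1$ is excluded from $\Omega_1\setminus_F\Omega_2$ but survives on your right-hand side. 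Concretely, take $\Omega_1=\{\mu_1\}$ with $\mu_1(?X)=1$, $\Omega_2=\{\mu_2\}$ with $\mu_2(?Z)=2$, and $F=(?Y=c)$: then $\Omega_1\setminus_F\Omega_2=\emptyset$, while $\sigma_F(\Omega_1\Join\Omega_2)=\emptyset$ and hence $\Omega_1\setminus\sigma_F(\Omega_1\Join\Omega_2)=\Omega_1$. The paper's proof is organized around exactly this point: under the literal definition it subtracts from $\Omega_1$ not only $\sigma_F(\Omega_1\Join\Omega_2)$ but also the error-evaluating join results, captured by the extra term $(\Omega_1\Join\Omega_2)\setminus\sigma_{(F\lor\neg F)}(\Omega_1\Join\Omega_2)$, and only under the errata interpretation does this collapse to your formula. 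Since your proposal never states or justifies that interpretation, the identity you set out to prove is false for the operator as defined.

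The step you yourself flag as the hard part is also a genuine hole, not a routine induction. The truth-value transfer from $\mu_1'\cup\mu_2'$ to $\mu_1\cup\mu_2'$ can really fail when the mappings in $\Omega_1$ have different domains, and then the forward inclusion is false as well: with $F=\bound(?Y)$, $\Omega_1=\{\mu_0,\mu\}$ where $\mu(?Y)=a$, and $\Omega_2=\{\mu_2\}$ with $\mu_2(?Z)=b$, the empty mapping $\mu_0$ belongs to $\Omega_1\setminus_F\Omega_2$ (its only join partner leaves $?Y$ unbound, so $F$ evaluates to $\false$), yet $\mu_0$ is compatible with $\mu\cup\mu_2\in\sigma_F(\Omega_1\Join\Omega_2)$ and is therefore removed by the simple difference. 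Hence no structural induction on $F$ can establish the transfer, because the statement being inducted on does not hold, and your fallback of conjoining $\bound$ atoms over $\var(F)$ does not help here, since $?Y$ is already bound in the offending join result; any repair must modify the expression (e.g., control which elements of $\Omega_1$ are allowed to generate the subtracted mappings), not merely the argument. In short, your write-up both omits the error case that the paper's proof explicitly treats and leaves open (with an inadequate contingency plan) the compatibility issue on which your whole forward direction hinges.
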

\begin{proof}
Recall that the \emph{difference} operator is defined as\\
$\Omega_1 \setminus_F \Omega_2 = \{ \mu_1 \in \Omega_1 \mid \forall \mu_2 \in \Omega_2, (\mu_1 \nsim \mu_2) \lor (\mu_1 \sim \mu_2 ~\land~ (\mu_1 \cup \mu_2)(F) = \false ) \}$.
\\
Assume that $\Omega_1' = \Omega_1 \setminus (\Omega_1 \setminus_F \Omega_2)$, i.e. $\Omega_1'$ is the complement of $\Omega_1 \setminus_F \Omega_2$.
We have that $\mu_1 \in \Omega_1'$ 
iff there exists a mapping $\mu_2 \in \Omega_2$ satisfying that $\mu_1 \sim \mu_2$ and $(\mu_1 \cup \mu_2)(F)$ evaluates to either $\true$ or $\error$.
Hence, we will have that  
\[
\Omega_1 \setminus_F \Omega_2 = \Omega_1 \setminus ( \sigma_F(\Omega_1 \Join \Omega_2) \cup ((\Omega_1 \Join \Omega_2) \setminus (\sigma_{(F \lor \neg F)}(\Omega_1 \Join \Omega_2))))
\]
where $\sigma_F(\Omega_1 \Join \Omega_2)$ corresponds to the mappings evaluating $F$ as $\true$, and $((\Omega_1 \Join \Omega_2) \setminus (\sigma_{(F \lor \neg F)}(\Omega_1 \Join \Omega_2)))$ corresponds to the mappings evaluating $F$ as $\error$.

Note that the above equivalence is based on a faithful interpretation as occurs in the current specification.
However, the document of errata in SPARQL 1.1 includes a clarification that differs of our interpretation\footnote{\url{https://www.w3.org/2013/sparql-errata#errata-query-12}}:          
``The definition of Diff should more clearly say that expressions that evaluate with an error are considered to be false.''
Under this interpretation, the proof is much simpler\footnote{In an earlier version of this report, the proof of Lemma \ref{lemma:diff-sdiff} was based on the full definition of LeftJoin presented in Sec. 18.5 of the current SPARQL 1.1 specification, a statement that was reported to be wrong in the errata \url{http://www.w3.org/2013/sparql-errata#errata-query-7a}
(We thank to R. Kontchakov, who called to our attention this issue
in a personal communication, May 19th, 2016.)
 The same bug in the proof of Lemma 1 was included in
the article ``Negation in SPARQL'' (Alberto Mendelzon International Workshop on Foundations of Data Management, AMW'2016).
Fortunately, the problem was with the proof, not with the statement
of the Lemma which remains untouched and thus does not affect the rest
of results in such article nor in this report.}:
\[ 
\Omega_1 \setminus_F \Omega_2 = \Omega_1 \setminus ( \sigma_F(\Omega_1 \Join \Omega_2)).
\]
\end{proof}


\begin{lemma}
\label{lemma:ljoin-sdiff}
The left-join operator (of the W3C SPARQL algebra) is expressible in the core SPARQL algebra.
\end{lemma}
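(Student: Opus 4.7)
The plan is to exploit the definition of the left-join directly, together with Lemma~\ref{lemma:diff-sdiff}. Recall that the W3C SPARQL algebra defines
\[
\Omega_1 \lojoin_F \Omega_2 \;=\; \sigma_F(\Omega_1 \Join \Omega_2) \;\cup\; (\Omega_1 \setminus_F \Omega_2),
\]
so the expression is already a union of two sub-expressions. The goal is therefore to show that each of the two summands can be written using only projection, selection, join, union and simple difference, and then to glue them together with the core union.

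The first summand, $\sigma_F(\Omega_1 \Join \Omega_2)$, is immediate: both selection $\sigma_F$ and join $\Join$ are primitives of the core SPARQL algebra, so no rewriting is needed. The second summand, $\Omega_1 \setminus_F \Omega_2$, is precisely the filtered-difference operator that was shown in Lemma~\ref{lemma:diff-sdiff} to be expressible in the core algebra. Under the errata-aligned reading adopted there, this gives the particularly clean rewriting
\[
\Omega_1 \setminus_F \Omega_2 \;=\; \Omega_1 \setminus \sigma_F(\Omega_1 \Join \Omega_2),
\]
which uses only selection, join and simple difference.

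Substituting this into the definition of the left-join yields
\[
\Omega_1 \lojoin_F \Omega_2 \;=\; \sigma_F(\Omega_1 \Join \Omega_2) \;\cup\; \bigl(\Omega_1 \setminus \sigma_F(\Omega_1 \Join \Omega_2)\bigr),
\]
an expression built entirely from core operators, which is exactly what is required. To conclude the proof I would just verify that cardinalities match on both sides: the two summands produce mappings of disjoint shapes in the sense of the definitions (one yields mappings extending some $\mu_2 \in \Omega_2$ that is compatible with $\mu_1$ and satisfies $F$, the other yields unmodified $\mu_1 \in \Omega_1$), so the multiplicities behave in the same way under both $\cup$ in the core algebra and the union in the definition of $\lojoin_F$. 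I do not anticipate any real obstacle here: the left-join is defined as a union involving $\setminus_F$, and the only non-trivial ingredient, the expressibility of $\setminus_F$, has already been established in the preceding lemma.
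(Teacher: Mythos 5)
Your proposal is correct and follows essentially the same route as the paper: unfold the definition $\Omega_1 \lojoin_F \Omega_2 = \sigma_F(\Omega_1 \Join \Omega_2) \cup (\Omega_1 \setminus_F \Omega_2)$, invoke Lemma~\ref{lemma:diff-sdiff} to rewrite $\Omega_1 \setminus_F \Omega_2$ as $\Omega_1 \setminus \sigma_F(\Omega_1 \Join \Omega_2)$, and conclude with the same core-algebra expression. Nothing further is needed.
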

\begin{proof}
Recall that $\Omega_1 \lojoin_F \Omega_2 = \sigma_F(\Omega_1 \Join \Omega_2) \cup (\Omega_1 \setminus_F \Omega_2)$.
Considering that the \emph{difference} expression $(\Omega_1 \setminus_F \Omega_2)$ can be expressed with \emph{simple difference} (Lemma \ref{lemma:diff-sdiff}), we have that
\[
\Omega_1 \lojoin_F \Omega_2 = \sigma_F(\Omega_1 \Join \Omega_2)~\cup~(\Omega_1 \setminus ( \sigma_F(\Omega_1 \Join \Omega_2)))
\]
\end{proof}

\begin{lemma}
\label{lemma:sdiff:minus}
The minus operator (of the W3C SPARQL algebra) is expressible in the core SPARQL algebra.
\end{lemma}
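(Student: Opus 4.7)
The plan is to express $\Omega_1 - \Omega_2$ using simple difference by partitioning $\Omega_1$ according to which of its ``shared'' variables with $\Omega_2$ are bound, so that within each block of the partition the vacuous-compatibility loophole of simple difference is automatically closed. Let $W = \dom(\Omega_1) \cap \dom(\Omega_2)$, a finite set. Since $\dom(\mu_i) \subseteq \dom(\Omega_i)$ whenever $\mu_i \in \Omega_i$, any overlap $\dom(\mu_1) \cap \dom(\mu_2)$ is contained in $W$, so the MINUS clause ``$\dom(\mu_1) \cap \dom(\mu_2) \neq \emptyset$'' is equivalent to ``some $X \in W$ is bound in both $\mu_1$ and $\mu_2$.''

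For each $S \subseteq W$ I will introduce the selection formula
\[
F_S \;=\; \bigwedge_{X \in S} \bound(?X) \;\wedge\; \bigwedge_{X \in W \setminus S} \neg\,\bound(?X),
\]
and set $\Omega_1^S = \sigma_{F_S}(\Omega_1)$. Each mapping of $\Omega_1$ satisfies exactly one $F_S$, so $\{\Omega_1^S\}_{S \subseteq W}$ partitions $\Omega_1$ with original multiplicities. The target identity is
\[
\Omega_1 - \Omega_2 \;=\; \Omega_1^{\emptyset} \;\cup\; \bigcup_{\emptyset \neq S \subseteq W} \Bigl( \Omega_1^S \;\setminus\; \bigcup_{X \in S} \sigma_{\bound(?X)}(\Omega_2) \Bigr).
\]
The $S = \emptyset$ contribution is immediate: if $\dom(\mu_1) \cap W = \emptyset$, then $\mu_1$ shares no variable with any $\mu_2 \in \Omega_2$, so MINUS always keeps it.

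For $S \neq \emptyset$, fix $\mu_1 \in \Omega_1^S$. On the MINUS side, $\mu_1$ is removed iff some $\mu_2 \in \Omega_2$ satisfies $\mu_1 \sim \mu_2$ and $\dom(\mu_1) \cap \dom(\mu_2) \neq \emptyset$; by the observation above, the latter is equivalent to the existence of $X \in S$ with $\mu_2 \in \sigma_{\bound(?X)}(\Omega_2)$. On the simple-difference side, $\mu_1$ is removed iff there exist such $X$ and $\mu_2$ with $\mu_1 \sim \mu_2$, which is exactly the same condition. The cardinality accounting is transparent because each $\mu_1$ sits in exactly one $\Omega_1^S$, selection preserves multiplicities, and simple difference preserves the multiplicities of the mappings it keeps. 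The main obstacle overcome by this construction—and where a one-shot attempt like $\Omega_1 \setminus (\Omega_1 \Join \Omega_2)$ breaks—is that simple difference treats disjoint-domain pairs as compatible: restricting the subtrahend to $\sigma_{\bound(?X)}(\Omega_2)$ for $X \in S \subseteq \dom(\mu_1)$ forces every compatibility witness to be non-vacuous, reconciling simple difference with the antijoin restriction of MINUS. Since only selection, union, and simple difference are used, the displayed expression witnesses expressibility in the core SPARQL algebra.
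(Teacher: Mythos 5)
Your proof is correct, but it takes a genuinely different route from the paper's. The paper simulates $\Omega_1 - \Omega_2$ with a single expression $\Omega_1 \setminus \sigma_F(\Omega_1 \Join \Omega_3)$, where $\Omega_3$ is a copy of $\Omega_2$ in which every shared variable $?X_i$ is renamed to a fresh $?X_i'$ and $F$ is the conjunction of the equalities $?X_i = ?X_i'$; the non-vacuity of the overlap is enforced by joining against the renamed copy and selecting on equality of the primed and unprimed variables. You instead partition $\Omega_1$ by the exact set $S$ of shared variables that each mapping binds, using $\bound$-based selections, and subtract from each block only the $\bound$-filtered copies $\sigma_{\bound(?X)}(\Omega_2)$, $?X \in S$, so that every compatibility witness is forced to be non-vacuous; no join and no variable renaming are needed. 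Each approach buys something: the paper's expression is linear in the number of shared variables and is the template reused verbatim at the graph-pattern level in Lemma~\ref{lemma:MINUS-DIFF} (where the renaming is done syntactically on $P_2$), whereas your construction stays strictly inside the core operators (renaming is not itself one of them), yields a transparent cardinality argument, and handles mappings that bind only some of the shared variables directly — a corner case where the plain conjunction $?X_1=?X_1' \land \dots \land ?X_n=?X_n'$ must be patched with $\bound$-disjuncts, since in the three-valued semantics an equality with an unbound side evaluates to $\error$ and the selection would miss the pair. The price is an expression of size exponential in $|\dom(\Omega_1)\cap\dom(\Omega_2)|$ (one block per $S$), and a degenerate reading is needed when the shared domain is empty (where the identity collapses to $\Omega_1 - \Omega_2 = \Omega_1$, as you note); neither affects correctness of the expressibility claim.
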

\begin{proof}
Note that the difference between $\Omega_1 - \Omega_2$ and $\Omega_1 \setminus \Omega_2$ is given by the condition $\dom(\mu_1) \cap \dom(\mu_2) = \emptyset$.
 Intuitively, it applies that $(\Omega_1 - \Omega_2) = \Omega_1$ when $\dom(\Omega_1) \cap \dom(\Omega_2) = \emptyset$. 
In the case that $\dom(\Omega_1) \cap \dom(\Omega_2) \neq \emptyset$, the rewriting is a little more tricky.

Recall that the operation $\Omega_1 - \Omega_2$ is defined by the set\\
$S = \{ \mu_1 \in \Omega_1 \mid \forall \mu_2 \in \Omega_2, \mu_1 \nsim \mu_2 \lor \dom(\mu_1) \cap \dom(\mu_2) = \emptyset \}$. \\
Note that the complement of $S$ in $\Omega_1$ is the set\\
$S^c = \{ \mu_1 \in \Omega_1 \mid \exists \mu_2 \in \Omega_2, \mu_1 \sim \mu_2 \land \dom(\mu_1) \cap \dom(\mu_2) \neq \emptyset \}$.\\
Assuming that $\dom(\Omega_1) \cap \dom(\Omega_2) = \{ ?X_1, \dots, ?X_n \}$, 
$(\Omega_1 - \Omega_2)$ can be simulated with an expression of the form
$\Omega_1 \setminus ( \sigma_F (\Omega_1 \Join \Omega_3))$
where  
$\Omega_3$ is a copy of $\Omega_2$ where every variable $X_i$ has been replaced with a free variable $X_i'$, and 
$F$ is a selection formula of the form 
$(\dots(?X_1 = ?X_1' \land ?X_2 = ?X_2') \dots ) \land ?X_n = ?X_n')$.
 Note that, the expression $\sigma_F (\Omega_1 \Join \Omega_3)$ returns the set $S^c$, which is subtracted from $\Omega_1$ to obtain $S$. 
It is easy to see that the proposed equivalent expression preserves the cardinalities. 
Hence, we have proved that $(\Omega_1 - \Omega_2)$ can be simulated with $(\Omega_1 \setminus \Omega_2)$.   
\end{proof}

Based on Lemmas \ref{lemma:diff-sdiff}, \ref{lemma:ljoin-sdiff} and \ref{lemma:sdiff:minus}, we can present our main result about the expressive power of the core SPARQL algebra.

\begin{theorem}
\label{theo:saeqma}
The core SPARQL algebra is equivalent with the W3C SPARQL algebra. 
\end{theorem}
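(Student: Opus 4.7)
The plan is to establish the stated equivalence by proving the two containments separately. The forward inclusion, that every operator of the core algebra is expressible in the W3C algebra, is nearly immediate: projection, selection, join, and union appear verbatim in both algebras, so the only thing to verify is that simple difference is definable from the W3C operators. For this, I would observe that $\Omega_1 \setminus \Omega_2$ coincides with $\Omega_1 \setminus_F \Omega_2$ whenever the selection formula $F$ evaluates to $\true$ on every joint mapping (for example, taking $F$ to be the constant $\true$, exactly as the paper already does implicitly in clause~3(b) of the semantics of $\OPT$ without a filter), because the second disjunct of the definition of $\setminus_F$ then collapses and only $\mu_1 \nsim \mu_2$ survives.

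For the reverse inclusion, I would again dispatch the four shared operators trivially and focus on reducing the remaining three W3C operators---$\setminus_F$, $\lojoin_F$, and the minus $-$---to the core operators. This is exactly what Lemmas \ref{lemma:diff-sdiff}, \ref{lemma:ljoin-sdiff}, and \ref{lemma:sdiff:minus} already achieve. To conclude the theorem, I would run a straightforward structural induction on the form of a W3C algebra expression, replacing each occurrence of $\setminus_F$, $\lojoin_F$, or $-$ with the equivalent core expression supplied by the corresponding lemma and leaving all shared operators untouched; since the core operators are themselves W3C operators, the induction hypothesis carries through at every node.

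No genuinely new obstacle arises at the level of the theorem itself, since all of the creative rewriting has been absorbed into the three preceding lemmas. The main subtlety worth flagging during the write-up is that the substitutions supplied by those lemmas must preserve multiset cardinalities, since we are working under bag semantics; that property is already established lemma-by-lemma (most delicately inside Lemma \ref{lemma:sdiff:minus}, where the fresh-variable renaming is the trickiest piece), so the inductive step indeed lifts to multisets of mappings rather than merely to sets. I would state this cardinality-preservation observation explicitly when assembling the inductive step so that the equivalence claim in the theorem is unambiguous.
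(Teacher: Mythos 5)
Your proposal is correct and follows essentially the same route as the paper, which obtains the theorem directly from Lemmas~\ref{lemma:diff-sdiff}, \ref{lemma:ljoin-sdiff} and \ref{lemma:sdiff:minus}; your only addition is to spell out the easy converse containment (that the simple difference $\Omega_1 \setminus \Omega_2$ is just $\Omega_1 \setminus_{(\true)} \Omega_2$, with matching cardinalities), which the paper leaves implicit. Nothing further is needed.
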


This result implies that the W3C SPARQL algebra could be implemented by a subset of the original operators (projection, selection, join and union), plus the simple difference operator.

\subsection{$\spqldiff$: A core fragment with negation}
Let us redefine the $\DIFF$ operator by using the simple difference operator of the core algebra as follows: 
\[
\ev{(P_1 \DIFF P_2)}{} = \ev{P_1}{} \setminus \ev{P_2}{}.
\] 
Assume that $\spqldiff$ is the language defined (recursively) by graph patterns of the form $(P \AAND P')$, $(P \UNION P')$, $(P \DIFF P')$ and $(P \FILTER C)$.
 Next, we will show that $\spqldiff$ can express a fragment of SPARQL whose $\NEX$ graph patterns satisfy a restricted notion of correlated variables.     

\subsubsection{Expressing OPTIONAL and MINUS with DIFF}
Here, we show that the $\OPT$ and $\MINUS$ operators can be expressed in $\spqldiff$ by using the $\DIFF$ operator.


\begin{lemma}
\label{lemma:OPT-DIFF}
The $\OPTIONAL$ operator is expressible in $\spqldiff$.
\end{lemma}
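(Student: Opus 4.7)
The plan is to handle the two cases of the $\OPT$ semantics separately and to exhibit, in each case, an equivalent pattern built only from $\AAND$, $\UNION$, $\FILTER$ and $\DIFF$. The definition of left-join already decomposes as $\Omega_1 \lojoin_F \Omega_2 = \sigma_F(\Omega_1 \Join \Omega_2) \cup (\Omega_1 \setminus_F \Omega_2)$, so the task reduces to rewriting the two terms and, in particular, the difference $\setminus_F$.

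First I would deal with case (b), i.e.\ when $P_2$ is not a filter pattern and $F$ is the constant $\true$. Here the difference specializes: since the condition $(\mu_1 \cup \mu_2)(\true)=\false$ can never hold, we get $\Omega_1 \setminus_{\true} \Omega_2 = \{\mu_1\in\Omega_1 \mid \forall \mu_2\in\Omega_2,\ \mu_1\nsim\mu_2\}$, which is exactly $\ev{(P_1 \DIFF P_2)}{G}$. The first summand is just $\ev{(P_1 \AAND P_2)}{G}$. Since cardinalities are preserved on both sides (each copy of $\mu_1$ in $\Omega_1$ contributes exactly one copy in the difference, while the join term counts the pairings), the claimed rewriting
\[
(P_1 \OPT P_2)\ \equiv\ (P_1 \AAND P_2)\ \UNION\ (P_1 \DIFF P_2)
\]
holds with multiset semantics.

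Next I would handle case (a), when $P_2 = (P_3 \FILTER C)$. The first summand $\sigma_{f(C)}(\ev{P_1}{G}\Join\ev{P_3}{G})$ is just $\ev{((P_1 \AAND P_3) \FILTER C)}{G}$. For the second summand $\ev{P_1}{G} \setminus_{f(C)} \ev{P_3}{G}$, I would appeal to the (errata-interpretation) rewriting established in the proof of Lemma~\ref{lemma:diff-sdiff}, namely $\Omega_1 \setminus_F \Omega_2 = \Omega_1 \setminus \sigma_F(\Omega_1 \Join \Omega_2)$. Taking $F=f(C)$, this is precisely $\ev{(P_1 \DIFF ((P_1 \AAND P_3) \FILTER C))}{G}$ by the definition of $\DIFF$ in $\spqldiff$. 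Putting the two summands together gives
\[
(P_1 \OPT (P_3 \FILTER C))\ \equiv\ ((P_1 \AAND P_3) \FILTER C)\ \UNION\ (P_1 \DIFF ((P_1 \AAND P_3) \FILTER C)).
\]

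The expected obstacle is the bookkeeping for case (a): one must verify that the cardinalities match, because $\mu_1\in\ev{P_1}{G}$ appears on the right of the $\UNION$ with its original multiplicity (property of simple difference) and on the left with the multiplicity coming from the join with $\ev{P_3}{G}$, matching exactly the contributions of $\sigma_{f(C)}(\Omega_1\Join\Omega_2)$ and $\Omega_1\setminus_{f(C)}\Omega_2$ in the definition of $\lojoin_{f(C)}$. Besides that accounting, the rest is a direct unfolding, and a short induction on the structure of $P$ extends the rewriting to arbitrary occurrences of $\OPT$ inside a graph pattern.
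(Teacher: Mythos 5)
Your proof is correct and takes essentially the same route as the paper's: split the left-join into its join/filter part and its $\setminus_F$ part, express the former with $\AAND$/$\FILTER$ and the latter with $\DIFF$ via the errata-based identity from Lemma~\ref{lemma:diff-sdiff}, arriving at the same rewriting in the filter case. The only cosmetic difference is that in the non-filter case you target $((P_1 \AAND P_2) \UNION (P_1 \DIFF P_2))$ rather than the paper's $((P_1 \AAND P_2) \UNION (P_1 \DIFF (P_1 \AAND P_2)))$, which is equally valid since $\Omega_1 \setminus \Omega_2 = \Omega_1 \setminus (\Omega_1 \Join \Omega_2)$ (a mapping of $\Omega_1$ is compatible with some mapping of $\Omega_2$ iff it is compatible with some mapping of $\Omega_1 \Join \Omega_2$), with cardinalities preserved on both sides.
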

\begin{proof}
Recall the semantics of optional graph patterns as presented in Section \ref{sec:patterns}.
Given a graph pattern $P$ of the form $(P_1 \OPTIONAL P_2)$, we have two cases:\\
(i) if $P_2$ is $(P_3 \FILTER C)$ then $\ev{P}{}{} = \ev{P_1}{}{} \lojoin_{f(C)} \ev{P_3}{}{}$;\\
(ii) else, $\ev{P}{}{} = \ev{P_1}{}{} \lojoin_{(\true)} \ev{P_2}{}{}$.\\ 
Note that this definition follows an operational semantics in the sense that the evaluation of $(P_1 \OPTIONAL P_2)$ depends on the structure of $P_2$, i.e. when $P_2$ is a filter graph pattern applies case (i), otherwise case (ii) . Let us analyze both cases.

Case (i): 
We have that 
$\ev{P_1}{}{} \lojoin_{f(C)} \ev{P_3}{}{} = \sigma_{F}(\Omega_1 \Join \Omega_3) \cup (\Omega_1 \setminus_{F} \Omega_3)$ 
where 
$\Omega_1 = \ev{P_1}{}{}$, $\Omega_3 = \ev{P_3}{}{}$ and $F = f(C)$. 
If we rewrite the right side in terms of simple difference (Lemma \ref{lemma:ljoin-sdiff}),
we obtain   
$ 
\sigma_{F}(\Omega_1 \Join \Omega_3) 
\cup 
(\Omega_1 \setminus ( \sigma_F(\Omega_1 \Join \Omega_3)))
$.
Similarly, the graph pattern $(P_1 \OPTIONAL (P_3 \FILTER C))$
can be rewritten as    
\[
(((P_1 \AAND P_3) \FILTER C ) \UNION (P_1 \DIFF ((P_1 \AAND P_3) \FILTER C))).
\]


Case (ii):
We have that $\ev{P_1}{}{} \lojoin_{(\true)} \ev{P_2}{}{} = \sigma_{(\true)}(\Omega_1 \Join \Omega_2) \cup (\Omega_1 \setminus_{(\true)} \Omega_2)$ where $\Omega_1 = \ev{P_1}{}{}$ and $\Omega_2 = \ev{P_2}{}{}$. 
If we rewrite the right side in terms of simple difference (Lemma \ref{lemma:ljoin-sdiff}), we obtain
$ 
\sigma_{(\true)}(\Omega_1 \Join \Omega_2) 
\cup 
(\Omega_1 \setminus ( \sigma_{(\true)}(\Omega_1 \Join \Omega_2)))
$, which can be reduced to
$(\Omega_1 \Join \Omega_2) \cup (\Omega_1 \setminus ( \Omega_1 \Join \Omega_2))$.  
Similarly, the graph pattern $(P_1 \OPTIONAL P_2)$ can be rewritten as  
\[
((P_1 \AAND P_2) \UNION (P_1 \DIFF (P_1 \AAND P_2))).
\]

Hence, we can use the above transformation to simulate the operational semantics of $\OPT$ by using $\DIFF$, i.e. $\OPT$ is expressible in $\spqldiff$. 
\end{proof}

\begin{lemma}
\label{lemma:MINUS-DIFF}
The $\MINUS$ operator is expressible in $\spqldiff$.
\end{lemma}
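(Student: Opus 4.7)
The plan is to lift Lemma \ref{lemma:sdiff:minus} from the algebra level to the graph pattern level, using the same variable-renaming trick but now realised syntactically within $\spqldiff$. Concretely, given $(P_1 \MINUS P_2)$, I would set $\{?X_1, \ldots, ?X_n\} = \var(P_1) \cap \var(P_2)$. If this set is empty, then for every $\mu_1 \in \ev{P_1}{G}$ and $\mu_2 \in \ev{P_2}{G}$ we have $\dom(\mu_1) \cap \dom(\mu_2) = \emptyset$, so $\ev{(P_1 \MINUS P_2)}{G} = \ev{P_1}{G}$ and the simulation is trivial. The interesting case is $n \geq 1$.

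For the non-trivial case, I would build a ``renamed copy'' $P_2'$ of $P_2$ by substituting each $?X_i$ throughout $P_2$ by a fresh variable $?X_i'$ chosen outside $\var(P_1) \cup \var(P_2)$. Since the substitution is purely syntactic and respects the recursive definition of graph patterns, $P_2'$ is still a graph pattern, and $\var(P_1) \cap \var(P_2') = \emptyset$, so the evaluation of $(P_1 \AAND P_2')$ on any graph $G$ behaves as a cartesian product of $\ev{P_1}{G}$ and $\ev{P_2'}{G}$ at the multiset level (every pair of mappings is compatible because their domains are disjoint).

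I would then introduce a filter constraint $C$ that selects, from this cartesian product, exactly the pairs corresponding to the removable case of $\MINUS$: the original $\mu_1$ and $\mu_2$ are compatible and share at least one bound variable. This mirrors the selection formula in the proof of Lemma \ref{lemma:sdiff:minus}, and can be written within the filter language using $\bound$-guarded equalities $(?X_i = ?X_i')$ on the renamed variables. Given such $C$, the claimed equivalence is
\[
(P_1 \MINUS P_2) \;\equiv\; (P_1 \DIFF ((P_1 \AAND P_2') \FILTER C)),
\]
and correctness follows by pointwise invocation of Lemma \ref{lemma:sdiff:minus}: the inner pattern computes (in multiset form) the complement $S^c$ of $\ev{P_1 \MINUS P_2}{G}$ inside $\ev{P_1}{G}$, and the outer $\DIFF$, by its definition via simple difference, returns $\ev{P_1}{G} \setminus S^c$. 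Cardinalities agree because both $\MINUS$ and simple difference preserve the cardinality of $\Omega_1$ verbatim.

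The main obstacle is getting the filter $C$ right under SPARQL's three-valued semantics: $\MINUS$ insists on a \emph{non-empty} overlap of domains together with compatibility on that overlap, whereas a naive conjunction of equalities silently requires boundedness of \emph{all} common variables and would be strictly too strong on patterns with optional/union subexpressions. Once $C$ is carefully written as a disjunction over the indices $i$ (ensuring at least one common variable is bound in both mappings) together with a conjunction enforcing equality on the overlap (guarded by $\bound$), the remainder of the argument is a routine unfolding of the definitions of $\AAND$, $\FILTER$ and $\DIFF$ in $\spqldiff$.
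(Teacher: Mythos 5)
Your construction is, in outline, exactly the paper's: split on whether $\var(P_1)\cap\var(P_2)$ is empty, and otherwise rename the shared variables of $P_2$, join the renamed copy with $P_1$, filter, and subtract with $\DIFF$, invoking Lemma~\ref{lemma:sdiff:minus}. Your worry that the plain conjunction $?X_1=?X_1'\land\dots\land ?X_n=?X_n'$ is too strong under the three-valued semantics (it evaluates to $\error$ as soon as one shared variable is unbound) is well founded, and your bound-guarded, disjunction-augmented filter is a genuine improvement over the formula that Lemma~\ref{lemma:sdiff:minus} actually writes down.

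However, even with the corrected filter the claimed equivalence $(P_1 \MINUS P_2) \equiv (P_1 \DIFF ((P_1 \AAND P_2')\FILTER C))$ has a gap, and it sits not in $C$ but in the outer $\DIFF$ step. Because $\var(P_2')$ is disjoint from $\var(P_1)$, a mapping $\mu_1\in\ev{P_1}{}$ is discarded by the outer $\DIFF$ as soon as it is compatible with \emph{some} filtered join tuple $\nu_1\cup\mu_3$, where $\nu_1$ may be a different mapping of $\ev{P_1}{}$; compatibility with $\nu_1\cup\mu_3$ collapses to compatibility with $\nu_1$, which says nothing about whether $\mu_1$ itself has a $\MINUS$-witness in $\ev{P_2}{}$. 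Concretely, take $G=\{(a,p,u),(b,q,u),(c,r,b)\}$, $P_1=((?X,p,u)\UNION(?Y,q,u))$ and $P_2=(?X,r,?Y)$, so that $\ev{P_1}{G}=\{\mu_{?X\to a},\,\mu_{?Y\to b}\}$ and $\ev{P_2}{G}$ consists of the single mapping $\mu_2$ with $\mu_2(?X)=c$, $\mu_2(?Y)=b$. Then $(P_1\MINUS P_2)$ keeps $\mu_{?X\to a}$ (it is incompatible with $\mu_2$) and drops $\mu_{?Y\to b}$. In your simulation the tuple extending $\mu_{?Y\to b}$ with $?X'\to c$, $?Y'\to b$ passes $C$ (the guard on $?X$ is vacuous, $?Y=?Y'$ holds with both sides bound), and $\mu_{?X\to a}$, whose domain $\{?X\}$ is disjoint from $\{?Y,?X',?Y'\}$, is compatible with that tuple, so the outer $\DIFF$ removes it as well: you obtain $\emptyset$ instead of $\{\mu_{?X\to a}\}$. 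The argument is sound when every mapping in $\ev{P_1}{}$ has domain $\var(P_1)$ (then compatibility with $\nu_1\cup\mu_3$ forces $\nu_1=\mu_1$), but it fails once $P_1$ contains $\UNION$ or $\OPT$. To be fair, the paper's own proof, which defers to the construction of Lemma~\ref{lemma:sdiff:minus}, breaks on the same example, so you have faithfully reproduced (and partially repaired) the intended argument rather than misunderstood it; still, as a standalone proof it is incomplete without handling mappings of heterogeneous domains in $\ev{P_1}{}$.
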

\begin{proof}
Given a graph pattern $(P_1 \MINUS P_2)$, we have that:\\
(i) If $\var(P_1) \cap \var(P_2) = \emptyset$ then
$\ev{(P_1 \MINUS P_2)}{} = \ev{P_1}{}$.\\
(ii) Else, $\ev{(P_1 \MINUS P_2)}{} = \Omega_1 - \Omega_2$ 
where $\Omega_1 = \ev{P_1}{}$ and $\Omega_2 = \ev{P_2}{}$.
Recalling the simulation of $(\Omega_1 - \Omega_2)$ with $(\Omega_1 \setminus \Omega_2)$ presented in Lemma \ref{lemma:sdiff:minus}, we have that $(P_1 \MINUS P_2)$ can be rewritten to a graph pattern of the form
$(P_1 \DIFF ((P_1 \AAND P_3) \FILTER C))$ where $P_3$ and $C$ must be created by following a procedure similar to the one described in Lemma \ref{lemma:sdiff:minus}.
Hence, we have proved that MINUS can be expressed in $\spqldiff$. 
\end{proof}

\subsubsection{NOT-EXISTS vs DIFF}
\label{sec:nex}
 In order to give a basic idea of the expressive power of NOT-EXISTS, we will  analyze its relationship with the DIFF operator. 
 Given the graph patterns $P = (P_1 \NEX P_2)$ and $P' = (P_1 \DIFF P_2)$,
the basic case of equivalence between $P$ and $P'$ is given when $\var(P_1) \cap \var(P_2) = \emptyset$.
 Note that, the non occurrence of correlated variables between $P_1$ and $P_2$ implies that, both $P$ and $P'$ are restricted to test whether $P_2$ returns an empty solution.

Intuitively, one can assume that any graph pattern $P = (P_1 \NEX P_2$) can be directly translated into $P' = (P_1 \DIFF P_2)$ such that $P$ and $P'$ are equivalent.
This is for example argued by Kaminski et. al \cite{91302} (Lemma 3). 
However, the translation given there does not work.
For instance, consider the graph $G$ = \{(a,p,b),(f,p,b),(c,q,d),(e,r,a)\} and the graph patterns\\
$P$ = (($?X$ p b) $\NEX$ ( ($?Z$ q d) $\NEX$ ($?W$ r $?X$))) and \\
$P'$ = (($?X$ p b) $\DIFF$ ( ($?Z$ q d) $\DIFF$ ($?W$ r $?X$))).\\
In this case, $P$ are $P'$ are not equivalent such that 
$\ev{P}{G} = \{ \mu_{?X \to a} \}$ whereas   
$\ev{P'}{G} = \{ \mu_{?X \to a}, \mu'_{?X \to f} \}$.

Despite the above negative result, it will be interesting to identify a subset of $\NEX$ graph patterns that can be expressed by using the $\DIFF$ operator. 
 To do this, we will introduce the notion of ``safe'' and ``unsafe'' variables.
 The set of \emph{safe variables} in a graph pattern $P$, denoted $\svar(P)$, is defined recursively as follows: 
 If $P$ is a triple pattern, then $\svar(P) = \var(P)$;
 If $P$ is $(P_1 \AAND P_2)$ then $\svar(P) = \svar(P_1) \cup \svar(P_2)$;
 If $P$ is $(P_1 \UNION P_2)$ or $(P_1 \OPT P_2)$ then $\svar(P) = \svar(P_1) \cap \svar(P_2)$;
 If $P$ is $(P_1 \FILTER C)$, $(P_1 \MINUS P_2)$, $(P_1 \NEX P_2)$ or $(P_1 \DIFF P_2)$ then $\svar(P) = \svar(P_1)$.
 Therefore, a variable occurring in $\svar(P)$ is called a \emph{safe variable}, otherwise it is considered \emph{unsafe} in $P$.

\begin{definition}[$\spqlexs$]
\label{def:spqlex}
Define $\spqlexs$ as the fragment of SPARQL graph patterns satisfying that the occurrence of a subpattern $(P \NEX P')$ implies that, for every correlated variable $?X$ between $P$ and $P'$, it holds that $?X \in \svar(P')$.
\end{definition}

Note that, $\spqlex$ does not allow graph patterns of the form \\
$(P_1 \NEX (P_2 \NEX P_3))$ where $P_3$ contains correlated variables occurring in $P_1$ but not occurring in $P_2$.
 
\begin{lemma}
\label{lemma:NEX2-DIFF}
The $\NEX$ graph patterns allowed in $\spqlexs$ are expressible in $\spqldiff$. 
\end{lemma}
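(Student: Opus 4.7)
I would prove the lemma by structural induction on the graph pattern, with the NEX case being the only substantive one. The cases of AND, UNION, and FILTER are handled by the very definition of $\spqldiff$; triple patterns form the base case; and OPT and MINUS are already dealt with by Lemmas~\ref{lemma:OPT-DIFF} and~\ref{lemma:MINUS-DIFF}. For the NEX case, given a pattern $(P_1 \NEX P_2)$ satisfying the $\spqlexs$ safety condition, I would first translate $P_1$ and $P_2$ recursively to $\spqldiff$ patterns $P_1'$ and $P_2'$, and then claim that $(P_1 \NEX P_2) \equiv (P_1' \DIFF P_2')$. In short, the translation of NEX amounts to a direct replacement by DIFF.

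The heart of the argument is the set equality $\ev{(P_1 \NEX P_2)}{G} = \ev{(P_1 \DIFF P_2)}{G}$ for every graph $G$, under the hypothesis that every correlated variable $?X \in \var(P_1) \cap \var(P_2)$ lies in $\svar(P_2)$. Fixing a $\mu_1 \in \ev{P_1}{G}$, I plan to show that $\ev{\mu_1(P_2)}{G} = \emptyset$ iff no $\mu_2 \in \ev{P_2}{G}$ is compatible with $\mu_1$. The bridge will be a substitution lemma asserting that $\nu \in \ev{\mu_1(P_2)}{G}$ iff there exists $\mu_2 \in \ev{P_2}{G}$ with $\mu_1 \sim \mu_2$ and $\nu = \mu_2|_{\var(P_2) \setminus \dom(\mu_1)}$. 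For the forward direction, a compatible $\mu_2$ restricted to the non-substituted variables yields a solution of $\mu_1(P_2)$; for the reverse direction, extending $\nu$ by $\mu_1$ restricted to the correlated variables produces a $\mu_2 \in \ev{P_2}{G}$ compatible with $\mu_1$. Since DIFF and NEX both preserve the multiplicity of $\mu_1$ coming from $\ev{P_1}{G}$, the bag-semantics cardinalities will match automatically.

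The main obstacle will be pinning down the exact role of the safety hypothesis. Without safety, a mapping $\mu_2 \in \ev{P_2}{G}$ could fail to bind a correlated variable $?X$---for instance if $?X$ occurs only on the right of an OPT buried inside $P_2$---and such a $\mu_2$ would be vacuously compatible with any $\mu_1$, so DIFF would discard $\mu_1$ even in cases where NEX retains it because the substituted pattern $\mu_1(P_2)$ is unsatisfiable. The safety condition rules out precisely this pathology: for every correlated $?X$, every $\mu_2 \in \ev{P_2}{G}$ binds $?X$, so compatibility with $\mu_1$ forces $\mu_1(?X) = \mu_2(?X)$, which is exactly the test encoded by $\mu_1(P_2)$. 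Formally, this rests on the auxiliary invariant that $?X \in \svar(P)$ implies $?X \in \dom(\mu)$ for every $\mu \in \ev{P}{G}$, proved by a routine structural induction on $P$ using the semantic rules of Section~\ref{sec:patterns}. Once this invariant is in place, the substitution lemma above closes the equivalence and completes the induction.
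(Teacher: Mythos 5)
You follow the same route as the paper: reduce everything to the single claim that, for $(P_1 \NEX P_2)$ satisfying Definition~\ref{def:spqlex}, one may simply replace $\NEX$ by $\DIFF$, and support this with the invariant that $?X \in \svar(P)$ implies $?X \in \dom(\mu)$ for every $\mu \in \ev{P}{G}$. That invariant is correct and is precisely the fact the paper invokes, and delegating the other operators to Lemmas~\ref{lemma:OPT-DIFF} and~\ref{lemma:MINUS-DIFF} is unproblematic. The difficulty, however, is concentrated in your ``substitution lemma'', which is exactly the step the paper dispatches with ``it is easy to see'' --- and as you state it, it is false, so the ``routine structural induction'' you defer to will not go through: substitution does not commute with evaluation under $\OPT$, $\MINUS$, $\NEX$ or $\DIFF$, and the safety condition of Definition~\ref{def:spqlex} is checked at each $\NEX$ node only against that node's own operands, so it does not control occurrences of an outer correlated variable nested inside non-monotone subpatterns of $P_2$.

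Concretely, let $G = \{(f,p,b),\,(f,q,d),\,(c,s,e),\,(e,r,a)\}$, $P_1 = (?X~p~b)$ and $P_2 = ((?X~q~?Y) \AAND ((?Z~s~e) \NEX (?W~r~?X)))$; this is the paper's own Kaminski-style counterexample padded with a conjunct that makes $?X$ safe. The pattern $(P_1 \NEX P_2)$ lies in $\spqlexs$: the inner $\NEX$ has no correlated variables between its operands, and the outer correlated variable $?X$ belongs to $\svar(P_2) = \{?X,?Y,?Z\}$. Evaluating $P_2$ as it stands, $(?W~r~?X)$ matches $(e,r,a)$, so the inner $\NEX$ yields $\emptyset$ and $\ev{P_2}{G} = \emptyset$; hence $\ev{(P_1 \DIFF P_2)}{G} = \ev{P_1}{G} = \{ \mu_{?X \to f} \}$. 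But for $\mu_1 = \mu_{?X \to f}$ the substituted pattern $\mu_1(P_2) = ((f~q~?Y) \AAND ((?Z~s~e) \NEX (?W~r~f)))$ has the solution sending $?Y$ to $d$ and $?Z$ to $c$, so $\ev{(P_1 \NEX P_2)}{G} = \emptyset$. That solution of $\mu_1(P_2)$ has no compatible counterpart in $\ev{P_2}{G}$, refuting the left-to-right direction of your bridge lemma and showing that the boundedness invariant buys only one of the two inclusions you need. To close the argument you would have to restrict how correlated variables may occur \emph{inside} $P_2$ (e.g.\ forbid them below the right argument of any nested non-monotone operator), a hypothesis strictly stronger than what Definition~\ref{def:spqlex} provides; note that the paper's own one-sentence proof hides exactly the same subtlety.
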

\begin{proof}
Following the definition of $\spqlexs$, we have that for any graph pattern $(P_1 \NEX P_2)$ it satisfies that the domain of $\ev{P_2}{}$ contains all the correlated variables between $P_1$ and $P_2$. Given such condition, it is easy to see that $\ev{(P_1 \DIFF P_2)}{}$ returns the same solutions as $\ev{(P_1 \NEX P_2)}{}$. 
\end{proof}

Based on Lemmas \ref{lemma:OPT-DIFF}, \ref{lemma:MINUS-DIFF} and \ref{lemma:NEX2-DIFF}, we can present our main result about the expressive power of the $\DIFF$ operator and $\spqldiff$.    

\begin{theorem}
\label{theo:saeqma}
$\spqldiff$ contains $\spqlexs$.
\end{theorem}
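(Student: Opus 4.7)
The plan is to prove the containment by structural induction on the graph patterns of $\spqlexs$, producing for every $P \in \spqlexs$ an equivalent pattern $\tau(P) \in \spqldiff$. The three preceding lemmas already cover exactly the non-trivial constructors, so the role of the theorem is to glue them together compositionally.

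First I would fix the induction. The base case is a triple pattern, which is already a pattern of $\spqldiff$, so $\tau(t) := t$. For the inductive step I would assume that for every strict subpattern $P_i$ of $P$ there is a $\tau(P_i) \in \spqldiff$ with $\ev{\tau(P_i)}{G} = \ev{P_i}{G}$ for every graph $G$, and note that the definition of $\spqlexs$ is such that every subpattern of a pattern in $\spqlexs$ is itself in $\spqlexs$ (the safety condition on $\NEX$ is hereditary on the syntax tree), so the induction hypothesis is available.

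Next, I would handle the constructors allowed directly in $\spqldiff$: for $(P_1 \AAND P_2)$, $(P_1 \UNION P_2)$, and $(P_1 \FILTER C)$, set $\tau$ to apply the same constructor to the translated children. Compositionality of the SPARQL semantics (each operator is defined purely in terms of $\ev{P_i}{G}$) makes the equivalence immediate. For the remaining constructors I would invoke the previous lemmas: for $(P_1 \OPT P_2)$, replace by the $\spqldiff$ rewriting of Lemma~\ref{lemma:OPT-DIFF}, applied to $\tau(P_1)$ and $\tau(P_2)$ (and $\tau(P_3)$ in the filter sub-case); for $(P_1 \MINUS P_2)$, use the rewriting of Lemma~\ref{lemma:MINUS-DIFF}; and for $(P_1 \NEX P_2)$, use Lemma~\ref{lemma:NEX2-DIFF}, whose applicability is guaranteed by the $\spqlexs$ hypothesis that every correlated variable between $P_1$ and $P_2$ lies in $\svar(P_2)$. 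Since each of these lemmas produces a pattern in $\spqldiff$ and we substitute already-translated children, the overall output is in $\spqldiff$.

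The main obstacle, and the only place requiring real care, is the $\NEX$ case: the translation from Lemma~\ref{lemma:NEX2-DIFF} replaces $P_1 \NEX P_2$ by $P_1 \DIFF P_2$, but correctness of this replacement relies crucially on the safety condition of $\spqlexs$, as the counterexample with the nested $\NEX$ already showed. I would therefore make explicit that the induction is on the $\spqlexs$ fragment (not on arbitrary SPARQL) and verify that substituting $\tau(P_1), \tau(P_2)$ for $P_1, P_2$ preserves both $\svar$ and the set of free variables, so that the safety hypothesis continues to hold after translation and Lemma~\ref{lemma:NEX2-DIFF} can be applied. Once this is in place, the conclusion $\ev{\tau(P)}{G} = \ev{P}{G}$ follows at every node, and hence $\spqldiff$ contains $\spqlexs$.
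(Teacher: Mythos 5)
Your proof is correct and follows essentially the same route as the paper: the paper simply asserts the theorem as a direct consequence of Lemmas~\ref{lemma:OPT-DIFF}, \ref{lemma:MINUS-DIFF} and \ref{lemma:NEX2-DIFF}, leaving the bottom-up structural induction (and the check that the safety condition of $\spqlexs$ licenses the $\NEX$-to-$\DIFF$ step) implicit, which is exactly what you make explicit.
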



\section{Properties of the SPARQL negation operators}
 In this section we evaluate the negation operators in terms of elementary equivalences found in set theory.
Specifically, we consider the following axioms concerning set-theoretic differences \cite{70116}:

\smallskip

\begin{tabular}{llllll}
\textbf{(a)} & $A \setminus A \equiv \emptyset$ & \verb+   + & \textbf{(g)} & $A \setminus (A \cap B) \equiv A \setminus B$ \\
\textbf{(b)} & $A \setminus \emptyset \equiv A$ & \verb+   + & \textbf{(h)} & $A \cap (A \setminus B) \equiv A \setminus B$ \\
\textbf{(c)} & $\emptyset \setminus A \equiv \emptyset$ & \verb+   + & \textbf{(i)} & $(A \setminus B) \cup B \equiv A \cup B$ \\
\textbf{(d)} & $A \setminus (A \setminus (A \setminus B)) \equiv A \setminus B$ & \verb+   + & \textbf{(j)} & $(A \cup B) \setminus B \equiv A \setminus B$ \\
\textbf{(e)} & $(A \cap B) \setminus B \equiv \emptyset$ & \verb+   + & \textbf{(k)} & $A \setminus (B \cap C) \equiv (A \setminus B) \cup (A \setminus C)$ \\
\textbf{(f)} & $(A \setminus B) \cap B \equiv \emptyset$ & \verb+   + & \textbf{(l)} & $A \setminus (B \cup C) \equiv (A \setminus B) \cap (A \setminus C)$ \\
\end{tabular}

\smallskip

 Our motivation to consider these equivalences is given by the intrinsic nature of expressing negation in SPARQL, i.e. the difference of graph patterns.
 Moreover, set-theoretic equivalences are supported by other database languages like relational algebra and SQL.    
 Recalling that SPARQL works under bag semantics, we will evaluate the behavior of the negation operators by examining both, set and bag semantics.
 
 Let us define the notion of equivalence between graph patterns. 
Two graph patterns $P_1$ and $P_2$ are equivalent, denoted by $P_1 \equiv P_2$, if it satisfies that $\ev{P_1}{G} = \ev{P_2}{G}$ for every RDF graph $G$.
 In order to evaluate the set-based equivalences in the context of SPARQL, we need to assume two conditions:
\begin{itemize}
\item[(1)]  As a general rule, a set-based operator requires two ``objects'' with the same structure (e.g. two tables with the same schema). Such requirement is implicitly satisfied in SPARQL thanks to the definition of solution mappings as partial functions. 
Basically, if we fix the set of variables $V$, then we have that for every pair of solution mappings $\Omega_1$ and $\Omega_2$ it satisfies that $\dom(\Omega_1) = \dom(\Omega_2)$, i.e. $\Omega_1$ and $\Omega_2$ have the same domain of variables (even when some variables could be unbounded). 
 Hence, set-based operations like union and difference can be applied to the algebra of solution mappings.
 
\item[(2)] SPARQL does not provide an explicit operator for intersecting two graph patterns $P_1$ and $P_2$, i.e. the solution mappings which belong both to $\ev{P_1}{}$ and $\ev{P_2}{}$. 
Note however, that a graph pattern $(P_1 \AAND P_2)$ resembles the intersection operation (under set-semantics) when $\ev{P_1}{}$ and $\ev{P_2}{}$ have the same domain of variables. Recalling condition (1), we will assume that the AND operator could be used to play the role of the intersection operator in the set-theoretic equivalences defined above.  
\end{itemize}

Given the above two conditions, we can apply a direct translation from a set-theoretic equivalence to a graph pattern equivalence. 
Specifically, the set-difference operator will be mapped to a SPARQL negation operator (DIFF, MINUS or NOT-EXISTS), the set-intersection operator will be mapped to AND, and the set-union operator will be replaced by UNION.
 Considering the specific features of NOT-EXISTS (i.e. correlation of variables), we will restrict our analysis to DIFF and MINUS. 

Let $P_1$, $P_2$, $P_3$, $P_4$ and $P_5$ be graph patterns satisfying that 
$\ev{P_1}{} = \emptyset$,
$\ev{P_2}{} = \{ \mu_0 \}$,
$\var(P_3) = \var(P_4)$
and
$\var(P_3) \cap \var(P_5) = \emptyset$. 
 By combining the above four graph patterns, we conducted a case-by-case analysis consisting of twenty five cases for equivalences (a)-(j) and one hundred twenty five cases for equivalences (k) and (l). 
 The differences between set and bag semantics were specially considered for equivalences (h), (i), (k) and (l). 
 Next, we present our findings.

DIFF satisfies most equivalences with exception of (h), (i), (k) and (l).
Equivalence (h) presents five cases which are not valid under bag semantics, although they are valid under set semantics. A similar condition occurs with ten and seven cases for equivalences (k) and (l) respectively. 
Additionally, we found ten cases which are not satisfied by equivalence (i). 
 
MINUS does not satisfy equivalences (e) to (l).
We found several cases where equivalences (e), (f), (g), (j) and (k) are not satisfied.
Similarly, there exist multiple cases where equivalences (h), (i) and (l) do not apply under bag semantics, but works for set semantics.
 We would like to remark that the ``odd results'' presented by the MINUS operator arise because the restriction about disjoint solution mappings introduced in its definition.

In summary, we have that each negation operator presents a particular behavior for the axioms studied here. 
Although none of them was able to satisfy all the axioms, we think that it does not mean that they are badly defined. 
In fact, the heterogeneity of the operators is a motivation to study their intrinsic properties and to try the definition of a set of desired properties for negation in SPARQL. 
The details about our case-by-case analysis are included in the Appendix.

\section{Conclusions}
In this paper we presented a systematic analysis of the types of negation supported in SPARQL 1.0 and SPARQL 1.1. 
After introducing the standard relational negation (the DIFF operator) we were
able to build a core and intuitive algebra (the same as the standard relational algebra) in SPARQL and prove that it is able to define the graph pattern operators.

We think that having a clear understanding of the operators of a language
(in this case, the operators of negation of SPARQL) helps both, developers
of databases, and users of the query language.
 We also think that the core language we identified (which is precisely the well
 known and intuitive relational algebra) is a much easier way to
 express queries for database practitioners, who learn from the beginning
 SQL, which now with this new algebra, can be found in the world of SPARQL.

\smallskip
\noindent
\textbf{Acknowledgements.}
R. Angles and C. Gutierrez are founded by the Millennium Nucleus Center
for Semantic Web Research under Grant NC120004.

\bibliographystyle{splncs}

\bibliography{referencias}

\appendix

\section{Simulation of Negation as Failure in SPARQL}
\label{sec:failure}

In order to conduct our discussion about the simulation of negation by failure in SPARQL, we need to introduce two concepts: RDF Datasets and the $\GRAPH$ operator. 

An \emph{RDF dataset} $D$ is a set 
$\{G_0,\langle u_1,G_1 \rangle,\dots,\langle u_n,G_n \rangle\}$
where each $G_i$ is a graph and each $u_j$ is an IRI.
$G_0$ is called the \emph{default graph} of $D$ and it is denoted $\dg(D)$.
Each pair $\langle u_i,G_i \rangle$ is called a \emph{named graph}; 
define $\name(G_i)_D = u_i$ and $\graph(u_i)_D = G_i$.
The set of IRIs $\{u_1,\dots,u_n\}$ is denoted $\names(D)$.
Every dataset satisfies that:
(i) it always contains one default graph (which could be empty);
(ii) there may be no named graphs; 
(iii) each $u_j$ is distinct; and
(iv) $\blank(G_i) \cap \blank(G_j) = \emptyset$ for $i \neq j$.
Finally, the \emph{active graph} of $D$ is the graph $G_j$ used for querying $D$.

Next, we also extend the graph pattern evaluation function. 
The evaluation of a graph pattern $P$ over a dataset $D$ with active graph $G$ will be denoted as $\eval{P}{D}{G}$ (or $\eval{P}{}{G}$ where $D$ is clear from the context).
Specifically, $\eval{P}{D}{G}$ returns the multiset of mappings that matches the dataset $D$ according to the graph pattern $P$.    

Let us extend the (recursive) definition of graph pattern.
Given a graph pattern $P$ and $n \in I \cup V$, we have that $(n \GRAPH P)$ is also a graph pattern.
The semantics of $(n \GRAPH P)$ is given as follows:
\begin{enumerate}
\item If $u \in I$ then $\eval{(u \GRAPH P)}{D}{}$ = $\eval{P}{D}{G}$ where $G = \graph(u)_D$ \\
\item If $?X \in V$ then $\eval{(?X \GRAPH P)}{D}{}$ =  $\bigcup_{v \in \names(D)} ( \eval{P}{D}{\graph(v)_D} \Join \{ \mu_{?X \to v} \})$  
\end{enumerate}

\smallskip

We are ready to begin our analysis.
In Section \ref{sec:diff}, we argued that the Negation by Failure, and also $(P_1 \DIFF P_2)$, can be simulated by a pattern of the form 
 \begin{equation}\label{eq:neg1}
((P_1 \OPT P_2) \FILTER (! \bound(?X)))
 \end{equation}
where $?X$ is a variable of $P_2$ not occurring in $P_1$.
Unfortunately, there are two issues with this solution:
\begin{itemize}
\item Variable $?X$ cannot be an arbitrary variable.
For example, $P_2$ could be in turn an optional pattern $(P_3 \OPT P_4)$ where only
variables in $P_3$ are relevant to evaluate the condition $(! \bound(?X))$. 
\item If $P_1$ and $P_2$ do not have variables in common, then there is no variable $?X$ to check unboundedness.    
\end{itemize}

Angles and Gutierrez \cite{90006} identified these issues and
proposed a solution for them. The solution was shown to have a small bug.

\paragraph{Perez's solution.}
A second approach which fixed the previous bag was presented in \cite{90851}, where it was proposed that $(P_1 \DIFF P_2)$ can be simulated by the pattern
\begin{equation}\label{eq:neg2}
( (P_1 \OPT~(P_2 \AAND~(?X_1~?X_2~?X_3))) \FILTER \neg \bound (?X_1))
\end{equation}
where $?X_1,?X_2,?X_3$ are fresh variables mentioned neither in $P_1$ nor in $P_2$.

This solution works well when the empty graph pattern is not allowed
 in the grammar. In the presence of it,
the counter example is given when $P_1$ and $P_2$ are empty graph patterns and the default graph is the empty graph.
In this case, we have that 
$\eval{(P_1 \DIFF P_2)}{D}{G_0} = \emptyset$ 
whereas 
$\eval{(\ref{eq:neg2})}{D}{G_0} = \{ \Omega_0 \}$ .

\paragraph{Polleres's solution.}
A third proposal was sketched by Axel Polleres in the mailing list of the SPARQL W3C Working Group, and discussed with the authors of this paper during the year 2009.
The proposed solution is defined formally in the following proposition.

\begin{proposition}\label{prop:neg}
Let $P_1,P_2$ be graph patterns.
We have that $(P_1 \DIFF P_2)$ is equivalent to 
\begin{equation}\label{eq:neg3}
(((P_1 \OPT P_2) \AAND (g \GRAPH~(?X \text{:p :o}))) \FILTER \neg \bound(?X))
\end{equation}
where $g$ is a named graph containing the single triple (:s :p :o)
and $?X$ is a free variable.
\end{proposition}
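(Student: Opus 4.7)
The plan is to verify that $\ev{(P_1 \DIFF P_2)}{}$ and $\ev{(\ref{eq:neg3})}{}$ agree as multisets of solution mappings, for every dataset $D$ that augments the user's data with the named graph $g$ holding the single triple $(\text{:s},\text{:p},\text{:o})$.

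The cornerstone of the argument is to isolate the role of the auxiliary subpattern $(g \GRAPH~(?X \text{:p :o}))$. Since $\graph(g)_D = \{(\text{:s},\text{:p},\text{:o})\}$, this subpattern evaluates to the singleton multiset $\{\mu_{?X \to \text{:s}}\}$, independently of the rest of the query; call this unique mapping $\eta$. Because $?X$ is a free (fresh) variable — it occurs neither in $P_1$ nor in $P_2$ — joining with $\{\eta\}$ never fails on compatibility grounds and merely appends the binding $?X \to \text{:s}$ to whatever mapping it is joined with. Consequently, the presence or absence of $?X$ in a solution of $(\ref{eq:neg3})$ acts as a reliable flag that signals whether the underlying optional join succeeded via $P_2$ or not.

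With this observation in hand, I would unfold the left-join and the filter. Expanding $\OPT$ via the algebraic identity $\Omega_1 \lojoin_{\true} \Omega_2 = (\Omega_1 \Join \Omega_2) \cup (\Omega_1 \setminus_{\true} \Omega_2)$ and propagating the fresh join with $\{\eta\}$ through it, the mappings in which the optional branch fires all carry $?X \to \text{:s}$ and are therefore eliminated by $\FILTER \neg \bound(?X)$, while the mappings in which the optional branch does not fire keep $?X$ unbound and pass the filter. Combining this with the compatibility equivalence $\mu_1 \sim (\mu_2 \cup \eta) \Leftrightarrow \mu_1 \sim \mu_2$, the set of survivors is exactly $\{\mu_1 \in \ev{P_1}{} : \forall \mu_2 \in \ev{P_2}{},\; \mu_1 \nsim \mu_2\}$, which is $\ev{(P_1 \DIFF P_2)}{}$ by definition. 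Multiplicities carry through because the $\setminus_{\true}$ branch inherits its cardinalities from $\ev{P_1}{}$, exactly as $\DIFF$ does.

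Finally, I would sweep the pathological cases that defeated the earlier proposals: $\ev{P_1}{} = \emptyset$, $\ev{P_2}{} = \emptyset$, $P_1$ and $P_2$ both equal to the empty graph pattern over an empty default graph (so that the only mapping in play is $\mu_0$), and $P_1, P_2$ sharing no variables. In each scenario the equivalence should reduce by direct substitution into the expansion above. I expect the main obstacle to lie in two places: (i) showing that enlarging the user's dataset with the auxiliary named graph $g$ does not perturb the evaluations of $P_1$ or $P_2$ themselves — which holds because $P_1$ and $P_2$ are evaluated against the active graph, never against $\graph(g)_D$, provided $g$ is chosen as a fresh IRI outside the user's dataset; and (ii) making the compatibility equivalence $\mu_1 \sim (\mu_2 \cup \eta) \Leftrightarrow \mu_1 \sim \mu_2$ together with the cardinality-preservation step fully rigorous under bag semantics, since that is exactly the point at which the freshness of $?X$ is used essentially and where the earlier attempts quietly broke.
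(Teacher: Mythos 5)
Your overall strategy (unfold the left join algebraically, use the freshness of $?X$ together with the fact that $(g \GRAPH~(?X \text{:p :o}))$ evaluates to the singleton $\{\mu_{?X \to \text{:s}}\}$ independently of the active graph, and let the filter discard exactly the mappings that acquired a binding for $?X$) is a genuinely different route from the paper, whose proof is an exhaustive corner-case comparison (Table~\ref{tab:neg}) over the possible shapes of $\ev{P_1}{}$ and $\ev{P_2}{}$, including the empty default graph. However, applied to the pattern exactly as parenthesized in the statement, your key step fails. In (\ref{eq:neg3}) the conjunct $(g \GRAPH~(?X \text{:p :o}))$ sits \emph{outside} the optional: the pattern is $(((P_1 \OPT P_2) \AAND (g \GRAPH~(?X \text{:p :o}))) \FILTER \neg\bound(?X))$. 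By your own observation, joining with $\{\mu_{?X \to \text{:s}}\}$ never fails and appends $?X \to \text{:s}$ to \emph{every} mapping, including those coming from the $\Omega_1 \setminus_{(\true)} \Omega_2$ branch of the left join. So after distributing that join over the union, \emph{all} solutions bind $?X$, the filter $\neg\bound(?X)$ removes everything, and the right-hand pattern is equivalent to the empty pattern, not to $(P_1 \DIFF P_2)$. Your claim that ``the mappings in which the optional branch does not fire keep $?X$ unbound'' contradicts the distribution you just performed; $?X$ is a reliable flag for the optional branch only if the GRAPH conjunct is placed inside that branch.

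What your argument actually establishes is the variant $(((P_1 \OPT (P_2 \AAND (g \GRAPH~(?X \text{:p :o})))) \FILTER \neg\bound(?X))$: there the join with $\{\mu_{?X \to \text{:s}}\}$ affects only the branch where $P_2$ matches, your compatibility lemma ($\mu_1 \sim (\mu_2 \cup \mu_{?X \to \text{:s}})$ iff $\mu_1 \sim \mu_2$, since $?X \notin \dom(\mu_1)$) yields exactly $\ev{P_1}{} \setminus \ev{P_2}{}$, and cardinalities are inherited from $\ev{P_1}{}$ as required. That nested form is the one the paper's Table~\ref{tab:neg} in effect evaluates (its entries are inconsistent with the outer-AND parse; e.g.\ row 4 would give $\emptyset$ rather than $\Omega_0$), so you should either flag the misplaced parentheses and prove the nested pattern explicitly, or accept that the formula you analyze and the formula you claim to prove do not coincide. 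Your remaining points --- that the auxiliary named graph $g$ does not perturb $\ev{P_1}{}$ and $\ev{P_2}{}$, which are evaluated over the active graph (this is precisely why the construction survives the empty default graph that defeats (\ref{eq:neg2})), and the bag-semantics bookkeeping --- are sound once the pattern is corrected.
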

\begin{proof}
In order to prove that the above equivalence holds, we have analyzed the corner cases for $(P_1 \DIFF P_2)$.
In Table \ref{tab:neg}, we compare the evaluation of the graph patterns presented in Proposition \ref{prop:neg}, considering all the possible solutions for $\ev{P_1}{}$ and $\ev{P_2}{}$, and including the special case when the default graph is the empty graph.   
We showed that the evaluation of both graph patterns are equivalent for all the cases.
Additionally, Table \ref{tab:neg} includes the results for the graph pattern presented in Equation \ref{eq:neg2}, and shows that it fails in the case (5) when the default graph is the empty graph as described before.  
\end{proof}

\begin{table}[t!]
\centering
\begin{tabular}{|c|c|c|c|c|c|c|c|}
\hline 
  &  &  &  & \multicolumn{2}{|c|}{$G_0 \neq \emptyset$} &  \multicolumn{2}{|c|}{$G_0 = \emptyset$}
\\ \hline

  & $\ev{P_1}{}$ & $\ev{P_2}{}$ & $\ev{(P_1 \DIFF P_2)}{}$ & $\ev{P_3}{}$ & $\ev{P_4}{}$ & $\ev{P_3}{}$ & $\ev{P_4}{}$
\\ \hline 

1 & $\emptyset$ & $\emptyset$ & $\emptyset$ & $\emptyset$ & $\emptyset$ & $\emptyset$ & $\emptyset$
\\ \hline  

2 & $\emptyset$ & $\Omega_0$ & $\emptyset$ & $\emptyset$ & $\emptyset$ & $\emptyset$ & $\emptyset$ 
\\ \hline  

3 & $\emptyset$ & $\Omega_{2}$ & $\emptyset$ & $\emptyset$ & $\emptyset$ & $\emptyset$ & $\emptyset$
\\ \hline  
 
4 & $\Omega_0$ & $\emptyset$ &  $\Omega_0$ & $\Omega_0$ & $\Omega_0$ & $\Omega_0$ & $\Omega_0$
 \\ \hline 

5 & $\Omega_0$ & $\Omega_0$ & $\emptyset$ & $\emptyset$ &  $\emptyset$ & $\Omega_0$ & $\emptyset$ 
\\ \hline  

6 & $\Omega_0$ & $\Omega_{2}$ & $\emptyset$ & $\emptyset$ & $\emptyset$ & -- & --  
\\ \hline  

7 & $\Omega_{1}$ & $\emptyset$ & $\Omega_{1}$ & $\Omega_{1}$ & $\Omega_{1}$ & -- & --
\\ \hline  

8 & $\Omega_{1}$ & $\Omega_0$ & $\emptyset$ & $\emptyset$ & $\emptyset$ & -- & -- 
\\ \hline  

9 & $\Omega_{1}$ & $\Omega_{1}$ & $\emptyset$ & $\emptyset$ & $\emptyset$ & -- & --
\\ \hline  

10 & $\Omega_{1}$ & $\Omega_{2}$ & $\Omega_{1} \setminus \Omega_{2}$ & $\Omega_{1} \setminus \Omega_{2}$ & $\Omega_{1} \setminus \Omega_{2}$ & -- & --
\\ \hline  

11 & $\Omega_{1}$ & $\Omega_{3}$ & $\emptyset$ & $\emptyset$ & $\emptyset$ & -- & --
\\ \hline  

\end{tabular}
\smallskip
\caption{Comparison of two implementations of difference of SPARQL graph patterns.
Assume that $\Omega_0 = \{  \mu_0 \}$ (join identity), 
and $\Omega_1,\Omega_2,\Omega_3$ are sets of mappings 
distinct of $\Omega_0$.
Additionally,  
$\dom(\Omega_1) \cap \dom(\Omega_2) \neq \emptyset$ 
and 
$\dom(\Omega_1) \cap \dom(\Omega_3) = \emptyset$.  
$P_3$ and $P_4$ are the graph patterns presented in Equations \ref{eq:neg2} and \ref{eq:neg3} respectively.
Note that, $\ev{P_4}{}$ is equivalent to $\ev{(P_1 \DIFF P_2)}{}$ in all the cases, whereas $\ev{P_3}{}$ fails when $G_0 = \emptyset$ (i.e. the default graph $G_0$ is the empty graph).}
\label{tab:neg}
\end{table}

\section{Case-by-case analysis}

Let $P_{\emptyset}$, $P_1$, $P_2$ and $P_3$ be graph patterns satisfying that 
$\ev{P_{\emptyset}}{} = \emptyset$,
$\ev{P_1}{} \neq \emptyset$,
$\ev{P_2}{} \neq \emptyset$
and
$\ev{P_3}{} \neq \emptyset$.
Also assume that $\Omega_1$, $\Omega_2$ and $\Omega_3$ are solutions mappings satisfying that 
$\dom(\Omega_1) \cap \dom(\Omega_2) \neq \emptyset$, 
$\dom(\Omega_1) \cap \dom(\Omega_3) = \emptyset$ and
$\dom(\Omega_2) \cap \dom(\Omega_2) = \emptyset$
Finally, recall that $\Omega_0$ denotes the multiset consisting of exactly the empty mapping $\mu_0$.

We will show, for each axiom, examples of cases not satisfied by the negation operators. 
We will use a generic operator $\NEG$ to denote any of the operators used in our analysis, i.e. $\DIFF$ and $\MINUS$.

\subsubsection{Axiom (a): $(P_1 \NEG P_2) \equiv \emptyset$ }. 

Both, $\DIFF$ and $\MINUS$ satisfy in all the cases.

\subsubsection{Axiom (b): $(P_1 \NEG P_{\emptyset}) \equiv P_1$ }. 

Both, $\DIFF$ and $\MINUS$ satisfy in all the cases.

\subsubsection{Axiom (c): $(P_{\emptyset} \NEG P_1) \equiv \emptyset$ }. 

Both, $\DIFF$ and $\MINUS$ satisfy in all the cases.

\subsubsection{Axiom (d): $(P_1 \NEG (P_1 \NEG (P_1 \NEG P_2))) \equiv (P_1 \NEG P_2)$ }. 

Both, $\DIFF$ and $\MINUS$ satisfy in all the cases.

\subsubsection{Axiom (e): $((P_1 \AAND P_2) \NEG P_2) \equiv P_{\emptyset}$ }.
\begin{itemize}
\item $\DIFF$ satisfies in all the cases.
\item $\MINUS$ fails in four cases. 
For instance, if $\ev{P_1}{} = \Omega_0$ and $\ev{P_2}{} = \Omega_0$ 
then $\ev{((P_1 \AAND P_2) \MINUS P_2)}{} = \Omega_0$ instead of $\emptyset$. 
\end{itemize}

\subsubsection{Axiom (f): $((P_1 \NEG P_2) \AAND P_2) \equiv P_{\emptyset}$ }.
\begin{itemize}
\item $\DIFF$ satisfies in all the cases.
\item $\MINUS$ fails in eleven cases. 
For instance, if $\ev{P_1}{} = \Omega_0$ and $\ev{P_2}{} = \Omega_2$ then 
$\ev{((P_1 \MINUS P_2) \AAND P_2)}{} = \Omega_2$ instead of $\emptyset$. 
\end{itemize}

\subsubsection{Axiom (g): $(P_1 \NEG (P_1 \AAND P_2)) \equiv (P_1 \NEG P_2)$ }.
\begin{itemize}
\item $\DIFF$ satisfies in all the cases.
\item $\MINUS$ fails in seven cases. 
For instance, if $\ev{P_1}{} = \Omega_1$ and $\ev{P_2}{} = \Omega_0$ 
then 
$\ev{(P_1 \MINUS (P_1 \AAND P_2))}{} = \Omega_0$
whereas 
$\ev{(P_1 \MINUS P_2)}{} = \Omega_1$. 
\end{itemize}

\subsubsection{Axiom (h): $(P_1 \AAND (P_1 \NEG P_2)) \equiv (P_1 \NEG P_2)$ }.
\begin{itemize}
\item Both, $\DIFF$ and $\MINUS$ fail under bag semantics, in five and twelve cases respectively. 
\item For instance, both operators fail when 
$\ev{P_1}{} = \Omega_1$ and $\ev{P_2}{} = \emptyset$
such that  
$\ev{(P_1 \AAND (P_1 \NEG P_2))}{} = \Omega_1 \Join \Omega_1$
whereas  
$\ev{(P_1 \NEG P_2)}{} = \Omega_1$.  
\end{itemize}

\subsubsection{Axiom (i): $((P_1 \NEG P_2) \UNION P_2) \equiv (P_1 \UNION P_2)$ }.
\begin{itemize}
\item $\DIFF$ generates distinct solutions in ten cases, and fails under bag semantics in four cases. 
For instance, if $\ev{P_1}{} = \Omega_0$ and $\ev{P_2}{} = \Omega_1$ then\\ 
$((P_1 \DIFF P_2) \UNION P_2) = \Omega_1$
whereas
$(P_1 \UNION P_2) = \Omega_0 \cup \Omega_1$.
\item $\MINUS$ fails under bag semantics in three cases.
For instance, if $\ev{P_1}{} = \Omega_1$ and $\ev{P_2}{} = \Omega_1$ then
$\ev{((P_1 \MINUS P_2) \UNION P_2)}{} = \Omega_1$
whereas\\
$\ev{(P_1 \UNION P_2)}{} = \Omega_1 \cup \Omega_1$.
\end{itemize}

\subsubsection{Axiom (j): $((P_1 \UNION P_2) \NEG P_2) \equiv (P_1 \NEG P_2)$ }.
\begin{itemize}
\item $\DIFF$ satisfies in all the cases.
\item $\MINUS$ fails in four cases. 
For instance, if $\ev{P_1}{} = \Omega_1$ and $\ev{P_2}{} = \Omega_0$ then 
$\ev{((P_1 \UNION P_2) \MINUS P_2)}{} = \Omega_1 \cup \Omega_0$
whereas
$\ev{(P_1 \MINUS P_2)}{} = \Omega_1$.
\end{itemize}

\subsubsection{Axiom (k): $(P_1 \NEG (P_1 \AAND P3)) \equiv ((P_1 \NEG P_1) \UNION (P_1 \NEG P_3)$ }.
\begin{itemize}
\item $\DIFF$ fails in ten cases under bags semantics.
For instance, 
if $\ev{P_1}{} = \Omega_1$, $\ev{P_2}{} = \Omega_0$ and $\ev{P_3}{} = \Omega_2$ then 
$\ev{(P_1 \DIFF (P_1 \AAND P3))}{} = \Omega_2$
whereas
$\ev{((P_1 \DIFF P_1) \UNION (P_1 \DIFF P_3)}{} = (\Omega_2 \setminus \Omega_1) \cup \Omega_2$.

\item $\MINUS$ generates distinct solutions in twenty two cases, and fails under bag semantics in sixty five cases.
For instance, 
if $\ev{P_1}{} = \Omega_0$, $\ev{P_2}{} = \Omega_1$ and $\ev{P_3}{} = \Omega_1$ then 
$\ev{(P_1 \MINUS (P_1 \AAND P3))}{} = \Omega_0$
whereas\\
$\ev{((P_1 \MINUS P_1) \UNION (P_1 \MINUS P_3)}{} = \Omega_1$.

\end{itemize}

\subsubsection{Axiom (l): $(P_1 \NEG (P_1 \UNION P3)) \equiv ((P_1 \NEG P_1) \AAND (P_1 \NEG P_3)$ }.
\begin{itemize}
\item Under bag semantics, $\DIFF$ and $\MINUS$ fail in seven and forty eight cases respectively.
\item For instance, both operators fail when 
$\ev{P_1}{} = \Omega_0$, $\ev{P_2}{} = \Omega_0$ and $\ev{P_3}{} = \Omega_1$ 
such that  
$\ev{(P_1 \NEG (P_1 \UNION P3))}{} = \Omega_1$
and\\
$\ev{((P_1 \NEG P_1) \AAND (P_1 \NEG P_3)}{} = \Omega_1 \Join \Omega_1$.

\end{itemize}

\end{document}